\newtheorem{theorem}{Theorem}
\newtheorem{remark}[theorem]{Remark}
\newtheorem{example}[theorem]{Example}
\newtheorem{proposition}[theorem]{Proposition}
\newenvironment{proof}[1][Proof]{\textbf{#1.} }{\ \rule{0.5em}{0.5em}}
\newenvironment{mat}{\left[\begin{array}{ccccccccccccccc}}{\end{array}\right]}
\newcommand\bcm{\begin{mat}}
\newcommand\ecm{\end{mat}}
\newenvironment{rmat}{\left[\begin{array}{rrrrrrrrrrrrr}}{\end{array}\right]}
\newcommand\brm{\begin{rmat}}
\newcommand\erm{\end{rmat}}
\newcommand{\EE}{{\mathord{I\kern -.33em E}}}
\def\E{{\EE}} 
\def\Q{{\mathbb Q}} 
\def\R{{\mathbb R}} 
\def\1{{\mathbf 1}} 
\def\F{{\mathcal F}} 
\def\P{{\mathbb P}}
\title{\LARGE \bf  Optimal Dynamic Futures Portfolio in a Regime-Switching Market Framework}
\author{Tim Leung 
\thanks{Department of Applied Mathematics, University of Washington, Seattle WA 98195. E-mail:
\mbox{timleung@uw.edu}. Corresponding author. } \and Yang Zhou\thanks{Department of Applied Mathematics, University of Washington, Seattle WA 98195. E-mail:
\mbox{yzhou7@uw.edu}.} }
\begin{document}

\maketitle

\begin{abstract} 
We study the problem of dynamically trading futures in a regime-switching market. Modeling the underlying asset price as a Markov-modulated diffusion process, we present a utility maximization approach to determine the optimal futures trading strategy. This leads to the analysis of the associated system of Hamilton-Jacobi-Bellman (HJB) equations, which are reduced to a system of linear ODEs.  We apply our stochastic framework to two models, namely, the Regime-Switching Geometric Brownian Motion (RS-GBM) model and Regime-Switching Exponential Ornstein-Uhlenbeck (RS-XOU) model. Numerical examples are provided to illustrate the investor's optimal futures positions and portfolio value across market regimes. 
\end{abstract}

\newpage
\section{Introduction}
Futures are bilateral contracts of agreement to buy or sell an asset at a pre-determined price at a pre-specified time in the future. The underlying assets can be physical commodities, market indexes, or financial instruments. Futures are standardized and exchange-traded. The Chicago Mercantile Exchange (CME), which is the world's largest futures exchange, averages over 15 million futures contracts traded per day.\footnote{Source: CME Group daily exchange volume and open interest report, available at \url{https://www.cmegroup.com/market-data/volume-open-interest/exchange-volume.html}} Within the universe of hedge funds and alternative investments, futures funds play an integral role with hundreds of billions under management. This motivates us to investigate the problem of trading  futures portfolio dynamically over time.
 
In this paper, we present a stochastic control approach to generate dynamic futures trading strategies. We consider a general regime-switching framework in which  the stochastic market regime is represented by a continuous-time finite-state Markov chain. The underlying asset's spot  price is modeled by a Markov-modulated diffusion process.  Then, we derive the no-arbitrage price dynamics for the futures contracts and determine the optimal futures trading strategy by solving a utility maximization problem. By analyzing and solving the associated    Hamilton-Jacobi-Bellman (HJB) equations, we derive the investor's value function and optimal trading strategies. 

Among our findings, we show that the investor's value function admits a   separable form under a general regime-switching market framework, and the original HJB equations are reduced to a system of linear ODEs.  In addition, we also define the investor's certainty equivalent to  quantify the value of the futures trading opportunity to the investor.  Surprisingly both the value function and certainty equivalent do not depend on the current spot and futures prices, and admits a universal form across different model specifications. Nevertheless, the risk premia associated with the  regime-switching model play a crucial role, not only in the  value function and certainty equivalent, but also in  the optimal strategy. In addition, we show two applications of our model, Regime-Switching Geometric Brownian Motion (RS-GBM) and Regime-Switching Exponential Ornstein-Uhlenbeck (RS-XOU) model. Exponential Ornstein-Uhlenbeck model is one of the mean reversion models used to analyze futures market.


In the literature, the notion of regime switching has been widely applied to  derivatives pricing, such as \cite{Buffington2002} and \cite{Elliott2005}, among many.  The stochastic control approach has been applied to stock portfolio optimization dating  back to \cite{Merton1971}, but much less has been done for dynamic futures portfolio in continuous time. In a  number of   companion papers  \citep{LeungYan2018,LeungYan2019,BahmanLeung,BahmanLeung2}, the  utility maximization approach is used to derive  dynamic   futures  trading strategies under various stochastic models without regime switching.  Also, we refer to  \cite{ZhouYin2003,Leung_regime2010,Chen19} for continuous-time portfolio optimization with regime switching. In comparison to these studies,  the current paper investigates the trading of derivatives (futures), instead of stocks, in a general regime-switching market framework that can be applied to an array of regime-switching models. This different setup leads to the interaction between the historical measure and risk-neutral pricing measure, and how they affect the evolution of futures prices and portfolio wealth.  In particular, the regime-switching feature also leads to jumps in the futures prices and the investor's wealth process, even though the underlying asset price has continuous paths. Moreover, under our regime-switching framework, the certainty equivalent is the same across different underlying models as long as the market price of risk stays the same.  



The rest of this paper is structured as follows. We describe the general market framework in Section \ref{sect-fut-port}. The dynamic futures portfolio optimization is discussed in Section \ref{sect-utility}. Then, we apply our framework to the RS-GBM model and RS-XOU model in Sections \ref{sect-GBM} and \ref{sect-EXP-OU}, respectively. Concluding remarks are provided in Section \ref{sect-conclude}.




\section{Futures Price  Dynamics}\label{sect-fut-port}

We fix a probability space $(\Omega,\mathcal{G},\mathbb{Q})$, where $\mathbb{Q}$ is the risk-neutral pricing measure $\Q$.  Let $\xi$ be a continuous-time irreducible finite-state Markov chain with state space $E=\{1,2,\ldots,M\}$.
 The generator matrix of $\xi$, denoted by $\bm{\tilde Q}$,  has entries $\bm{\tilde Q}(i,j)=\tilde q(i,j)$  such that $\tilde{q}(i,j)\ge 0$ for $i\neq j$ and $\sum_{j\in E}\tilde{q}(i,j)=0$ for $i\in E$. This Markov chain represents the changing market regime   and  influences the underlying asset's price dynamics.
 
We can use a stochastic integral with respect to a Poisson random measure
to represent Markov chain $\xi$. For $i,j\in E$ with $i\neq j$, let $\Delta(i,j)$ be the consecutive left-closed,
right-open intervals of the real line, each having length $\tilde{q}(i,j)$. Define a function $h:E\times\R\rightarrow\R$ by 
\begin{equation}
h(i,z)=\sum_{j\in E\setminus\{i\}}(j-i)I_{\{z\in\Delta(i,j)\}}.
\end{equation}
Then, under measure $\Q$, the Markov chain $\xi_t$ evolves according to 
\begin{equation}
d\xi_t =\int_\R h(\xi_t,z)N(dt,dz),\label{Xi_SDE_N}
\end{equation}
where  $N(dt,dz)$ is the Poisson random measure with  intensity $dt\times\tilde{\mu}(dz)$ and $\tilde{\mu}$ is the Lebesgue measure satisfying
\begin{equation}
\int_\R I_{\{z\in\Delta(i,j)\}}\tilde{\mu}(dz) = |\Delta(i,j)|=\tilde{q}(i,j),\label{LebesgueMeasure}
\end{equation}
with $|\Delta(i,j)|$ being the length of $\Delta(i,j)$.  We can also express  SDE \eqref{Xi_SDE_N}  as
\begin{align}
d\xi_t = \sum_{j\in E\backslash\{\xi_t\}}\tilde{q}(\xi_t,j)(j-\xi_t)dt+\int_\R h(\xi_t,z)M^\Q(dt,dz),
\end{align}
using the compensated Poisson process under measure $\Q$ defined by
\begin{equation}
M^\Q(dt,dz)=N(dt,dz)-dt\times\tilde{\mu}(dz).
\end{equation}

The underlying asset's spot price is denoted by $S_t$. Its log-price,   $X_t=\log(S_t)$, evolves according to 
\begin{equation}
dX_t = \tilde a(t,X_t,\xi_t)dt+b(t,X_t,\xi_t)dZ_t^\Q,\label{log-price}
\end{equation}
where $Z^\Q$ is the standard Brownian motion under the measure $\Q$ and independent of $\xi$. The drift and volatility functions  $\tilde{a}(\cdot, \cdot,\cdot)$ and $b(\cdot, \cdot,\cdot)$ are assumed to satisfy the conditions such that  SDE\eqref{log-price} has a strong solution.  

Consider a futures contract on  the underlying asset $S$ with maturity $T$. Its no-arbitrage price at time $t\le T$ is given by  the conditional expectation  under the risk-neutral pricing measure $\mathbb{Q}$:
\begin{equation}
F_i(t,x) = \E^\Q[\exp(X_{T})|X_t=x,\xi_t=i].
\end{equation}
The futures price function $F_i(t,x)$  is determined from the  following  system of PDEs 
\begin{equation}
\begin{aligned}
\partial_t F_i+\mathcal{L}^\Q_i F_i
+\sum_{j\in E\setminus\{ i\}} \tilde q(i,j)( F_j- F_i)=0,\label{FuturePDE}
\end{aligned}
\end{equation}
for $(t,x)\in[0,T)\times\R$ and $i=1,\ldots,M$, where 
\begin{equation}
\mathcal{L}^\Q_i \cdot := \tilde a(t,x,i)\partial_x\cdot+\frac{b^2(t,x,i)}{2}\partial_{xx}\cdot\,.
\end{equation}
 To facilitate presentation, we have dropped the variables from different functions in   (\ref{FuturePDE}) and will do the same in PDEs that follow when no ambiguity arises. 
 

For the futures trading problem,  asset and futures prices are observed under the physical measure $\P$. Under measure $\mathbb{P}$, the Markov chain $\xi$ has generator matrix $\bm{Q}$ with  entries $\bm{Q}(i,j)=q(i,j)$, where $i,j\in E$. Since $\mathbb{P}$ and $\Q$ are equivalent measures, we have $q(i,j)=0$ iff $\tilde q(i,j)=0$. To relate the Poisson random measures under measures $\P$ and $\Q$, we denote by $\mu(dz)$ the intensity measure of $N(dt,dz)$ under measure $\P$ such that
\begin{equation}
 \mu(dz) = \left\{
 \begin{aligned}
 &\frac{q(i,j)}{\tilde{q}(i,j)}\tilde{\mu}(dz),&\mbox{for }z\in\Delta(i,j),\\
 & \tilde{\mu}(dz), &\mbox{others},
 \end{aligned}
 \right.
 \end{equation}
 under the convention that $0/0=1$.   Then, the compensated Poisson process under measure $\P$ is
\begin{equation}
\begin{aligned}
&M^\P(dt,dz) = M^\Q(dt,dz)-\sum_{i,j\in E,i\neq j}\frac{q(i,j)-\tilde{q}(i,j)}{\tilde{q}(i,j)}I_{\{z\in \Delta(i,j)\}}dt\times\tilde{\mu}(dz).
\end{aligned}
\end{equation}
Accordingly, the Markov chain $\xi_t$ satisfies
\begin{align}
d\xi_t =  \sum_{j\in E\backslash\{\xi_t\}}q(\xi_t,j)(j-\xi_t)dt+\int_\R h(\xi_t,z)M^\P(dt,dz).
\end{align}

 To relate the Brownian motions under measures $\P$ and $\Q$, we denote by $\zeta(\xi_t)$ the  risk premium associated with the Brownian motion such that 
\begin{equation} \label{ZPZQ}
dZ_t^\Q=dZ_t^\P+\zeta(\xi_t) dt,
\end{equation}
In turn, the log spot price   satisfies
\begin{equation}
dX_t=a(t,X_t,\xi_t)dt+b(t,X_t,\xi_t)dZ_t^{\P},
\end{equation}
whose drift is given by 
\begin{align*}
a(t,x,i) := \tilde{a}(t,x,i)+ \zeta(i) b(t,x,i).
\end{align*}  
Here, $Z^\P$ is the standard Brownian motion under $\P$ and is  independent of $\xi$.

Applying It\^{o}'s lemma, the futures price $F_t$ satisfies the stochastic differential equation (SDE),
\begin{equation}
\begin{aligned}
dF_t &= \eta(t,X_t,\xi_t)dZ_t^\Q+\int_\R \sum_{j\in E\setminus\{\xi_t\}} \Delta_F(t,X_t,\xi_t,j)I_{\{z\in\Delta(\xi_t,j)\}}M^\Q(dt,dz)\label{FutureSDE},
\end{aligned}
\end{equation}
where we have defined  
\begin{align}
\eta(t,x,i)&=b(t,x,i)\partial_x F_i (t,x),\\
\Delta_F(t,x,i,j)&=F_j(t,x)-F_i(t,x),\label{Difference_Function_N_Regimes}
\end{align} 
for $i,j\in E$. In particular, $\Delta_F(t,x,i,i)=0$ by definition. We note that $F_t$ is a $\Q$-martingale. Applying \eqref{ZPZQ}, the  futures price   $F_t$ admits the   $\P$-dynamics:
\begin{equation}\label{futuresPSDE}
\begin{aligned}
dF_t&=\bigg(\eta(t,X_t,\xi_t)\zeta(\xi_t)+\sum_{j\in E\setminus\{\xi_t\}}(q(\xi_t,j)-\tilde{q}(\xi_t,j))\Delta_F(t,X_t,\xi_t,j) \bigg)dt\\
&+\eta(t,X_t,\xi_t)dZ_t^{\P}+\int_\R \sum_{j\in E\setminus\{\xi_t\}} \Delta_F(t,X_t,\xi_t,j)I_{\{z\in\Delta(\xi_t,j)\}}M^\P(dt,dz).
\end{aligned}
\end{equation}
A key feature of this regime-switching framework is that the futures price process is a jump-diffusion even though the  spot price process has continuous paths. The jumps in the futures prices will have direct impact on the   strategies in dynamic futures portfolio.


\section{Futures Portfolio Optimization}\label{sect-utility}
We now consider the problem of dynamic trading futures contracts. Consider $M$  futures $\bm{F}=(F^{(1)},\ldots,F^{(M)})$ written on the same   asset $S$ with different maturities, denoted by $T_1<T_2<\ldots<T_M$ without loss of generality. We define $M\times M$  \textit{coefficient matrix} by
 \begin{equation}\label{coefficient_Definition}
 \bm{\Gamma}(t,x,i)= \bcm\eta^{(1)}(t,x,i)&\eta^{(2)}(t,x,i)&\cdots& \eta^{(M)}(t,x,i)\\
 \Delta_F^{(1)}(t,x,i,1)&\Delta_F^{(2)}(t,x,i,1)&\cdots&\Delta_F^{(M)}(t,x,i,1)\\
 \vdots& \vdots&\vdots &\vdots\\
 \Delta_F^{(1)}(t,x,i,i-1)&\Delta_F^{(2)}(t,x,i,i-1)&\cdots&\Delta_F^{(M)}(t,x,i,i-1)\\
 \Delta_F^{(1)}(t,x,i,i+1)&\Delta_F^{(2)}(t,x,i,i+1)&\cdots&\Delta_F^{(M)}(t,x,i,i+1)\\
  \vdots& \vdots&\ddots &\vdots\\
 \Delta_F^{(1)}(t,x,i,M)&\Delta_F^{(2)}(t,x,i,M)&\cdots&\Delta_F^{(M)}(t,x,i,M) \ecm.
\end{equation}
 It follows from \eqref{FutureSDE} that if this $M\times M$ matrix $\bm{\Gamma}$ is invertible, then it is sufficient to these $M$ futures     to  fully replicate any other futures on $S$ with a different maturity, up to the shortest maturity. To see this, for a futures with an arbitrary maturity $T$, its price is connected with the prices of the $M$ futures as follows:
\[
dF_t\! =\bcm dF_t^{(1)}&dF_t^{(2)}&\cdots&dF_t^{(M)} \ecm \bm{\Gamma}(t,X_t,\xi_t)^{-1}\bcm\eta(t,X_t,\xi_t)\\\Delta_F(t,X_t,\xi_t,1)\\\vdots\\\Delta_F(t,X_t,\xi_t,\xi_t-1)\\\Delta_F(t,X_t,\xi_t,\xi_t+1)\\\vdots\\\Delta_F(t,X_t,\xi_t,M)\ecm,\]
for $0\le t\le T\!\wedge\! T_1 $. 
In other words, the $T$-futures is redundant in this market with $M$ regimes.

\subsection{Utility Maximization}
We consider a portfolio of $M$ futures with different maturities $T_1<T_2<\ldots<T_M$.  We assume that these futures are not redundant, and the interest rate is zero.  We fix the trading horizon $\tilde{T}$, which must not exceed the shortest maturities of the futures in the portfolio. Hence, we require that $\tilde{T} \le T_1$. The dynamic portfolio   contains $\pi^{(i)}_t$ units of the futures $F^{(i)}$ at time $t\le \tilde{T}$. For any strategy $\bm{\pi_t}=(\pi^{(1)}_t,\pi^{(2)}_t,\cdots,\pi^{(M)}_t)'$,
 the wealth process is given by 
\begin{align}
dW^\pi_t &= \sum_{k=1}^{M}\pi^{(k)}_t\,dF_t^{(k)}.\label{WealthProcess}
\end{align}

We now reorganize terms in wealth process \eqref{WealthProcess}. To that end, we define the transformed strategies by
\begin{align}
\tilde{\pi}^{(0)}_t &= \sum_{k=1}^M \pi^{(k)}_t\eta^{(k)}(t,X_t,\xi_t),\label{tpi0}\\
\tilde{\pi}^{(j)}_t &= \sum_{k=1}^M \pi^{(k)}_t\Delta^{(k)}_F(t,X_t,\xi_t,j), \quad \text{for} \quad  j\in E.\label{tpii}
\end{align}
Note that $\tilde{\pi}^{(\xi_t)}_t= \sum_{k=1}^M \pi^{(k)}_t\Delta^{(k)}_F(t,X_t,\xi_t,\xi_t)=0$  since $\Delta^{(k)}_F(t,x,i,i)=0$, for $\forall k =1,\ldots,M$ and $\forall i\in E$, according to \eqref{Difference_Function_N_Regimes}.
In matrix form,  we have
\begin{equation}\label{MatrixFormStrategy}
\bm{\tilde{\pi}_t}=\bm{\Gamma}(t,X_t,\xi_t)\bm{\pi_t},
\end{equation} 
where $
\bm{\tilde{\pi}_t}=(\tilde{\pi}^{(0)}_t,\tilde{\pi}^{(1)}_t,\cdots,\tilde{\pi}^{(\xi_t-1)}_t, \tilde{\pi}^{(\xi_t+1)}_t,\cdots,\tilde{\pi}^{(M)}_t)'.
$  Applying \eqref{FutureSDE}, \eqref{tpi0} and \eqref{tpii}  to \eqref{WealthProcess}, the wealth process becomes
\begin{equation}\label{WealthProcess_New_Strategy}
\begin{aligned}
dW^{\pi}_t &= \bigg(\zeta(\xi_t)\tilde{\pi}^{(0)}_t+\sum_{j\in E\setminus\{\xi_t\}}(q(\xi_t,j)-\tilde{q}(\xi_t,j))\tilde{\pi}^{(j)}_t\bigg)dt\\
&\quad\quad+ \tilde{\pi}^{(0)}_tdZ_t^\P+\int_\R \sum_{j\in E\setminus\{\xi_t\}}\tilde{\pi}^{(j)}_tI_{\{z\in\Delta(\xi_t,j)\}}M^\P(dt,dz) .
\end{aligned}
\end{equation}
Notice that the wealth process is subject to jumps when the market regime switches states. 

We consider a utility maximization approach to determine the optimal futures trading strategy. The investor seeks to maximize the expected utility by dynamically trading the futures continuously over time.  We assume the associated coefficient matrix $\bm{\Gamma}(t,X_t,\xi_t)$ be invertible, and it would act as a bijection mapping  between   $\bm{\pi_t}$ and   $\bm{\tilde{\pi}_t}$. This allows us to solve the portfolio optimization problem by maximizing over $\bm{\tilde{\pi}_t}$. Notice the wealth process \eqref{WealthProcess_New_Strategy} does not explicitly depend on  $X_t$.   The investor solves the following utility maximization problem
 \begin{equation}\label{valuefunction}
u_i(t,w)=\sup_{\tilde{\pi}}\E\big[U(W^{\pi}_{\tilde{T}})|W_t=w,\xi_t=i\big],
\end{equation}
where $U(w)=-e^{-\gamma w}$ is exponential utility function with a constant risk aversion parameter $\gamma >0$.

The investor's value function $u_i(t,w)$ is determined from a system of   Hamilton-Jacobi-Bellman (HJB) equations. Precisely, we have
\begin{equation}
\begin{aligned}
&\partial_t u_i+\max_{\bm{\tilde{\pi}_t}}\bigg\{ \bigg(\zeta_i\tilde\pi^{(0)}_t-\sum_{j\in E\setminus\{i\}}\tilde{q}_{ij}\tilde{\pi}^{(j)}_t\bigg)\partial_w u_i+\frac{(\tilde{\pi}^{(0)}_t)^2}{2}\partial_{ww}u_i\\
&+\sum_{j\in E\setminus\{i\}}q_{ij}\bigg(u_j(t,w+\tilde{\pi}^{(j)}_t)-u_i(t,w)\bigg)\bigg\}=0,
\end{aligned}\label{HJB}
\end{equation}
for $i\in E$ and $t\in[0,\tilde T)$. The terminal condition is $u_i(\tilde T,w)=-e^{-\gamma w}$, for $i\in E$. We have used the shorthand notations: $\tilde{q}_{ij} \equiv \tilde{q}(i,j)$, $q_{ij}\equiv q(i,j)$ and $\zeta_i\equiv\zeta(i)$. 

Performing the optimization in \eqref{HJB} and assuming that $\partial_{ww}u_i\le 0$ (which will be verified later), we obtain the first-order conditions for the optimal strategy:
\begin{equation}
\left\{
\begin{aligned}
&\tilde{\pi}^{(0)*}(t,w,i)=-\zeta_i\frac{\partial_w u_i(t,w)}{\partial_{ww}u_i(t,w)},\\
&\partial_wu_j(t,w+\tilde{\pi}^{(j)*}(t,w,i)) = \frac{\tilde{q}_{ij}}{q_{ij}}\partial_w u_i(t,w),
\end{aligned}
\right.\label{eqnpi}
\end{equation}
for $i\in E$ and $j\in E\setminus\{i\}$. Plugging    this  into   (\ref{HJB}), the HJB equations become 
\begin{equation}\label{HJB12}
\begin{aligned} 
 &\partial_t u_i-\frac{(\zeta_i\partial_w u_i)^2}{2\partial_{ww}u_i}-\sum_{j\in E\setminus\{i\}}\tilde{q}_{ij}\tilde{\pi}^{(j)*}_i\partial_w u_i+\sum_{j\in E\setminus\{i\}}q_{ij}\bigg(u_j(t,w+\tilde{\pi}^{(j)*}_i)-u_i(t,w)\bigg)=0,
 \end{aligned}
 \end{equation}
 for $i\in E$, where we have denoted $\tilde{\pi}^{(j)*}_i\equiv \tilde{\pi}^{(j)*}(t,w,i)$.

We now  consider the transformation for the value function 
\begin{equation}
u_i(t,w)=-e^{-\gamma w+\varphi_i(t)}.\label{ValueFunction}
\end{equation}
Applying this to the first-order conditions \eqref{eqnpi} and HJB equation \eqref{HJB12},  the optimal strategy can be written explicitly 
\begin{equation}
\left\{
\begin{aligned}
&\tilde{\pi}^{(0)*}_i=\frac{\zeta_i}{\gamma },\\
&\tilde{\pi}^{(j)*}_i(t)=-\frac{1}{\gamma}\left(\log\frac{\tilde{q}_{ij}}{q_{ij}}+\varphi_i(t)-\varphi_j(t)\right),
\end{aligned}
\right.\label{OptimalStrategy}
\end{equation}
for $i\in E$ and $j\in E\setminus\{i\}$. We note  that both $\tilde{\pi}^{(0)*}_i$ and $\tilde{\pi}^{(j)*}_i(t)$ do not depend on   wealth $w$. Substituting the transformation \eqref{ValueFunction} and optimal strategy \eqref{OptimalStrategy} into   \eqref{HJB12}, we obtain a  system of ODEs for $\varphi_i(t)$:
\begin{equation}
\begin{aligned}
&\varphi_i'(t)-\sum_{j\in E\setminus\{i\}}\tilde{q}_{ij}\bigg(\varphi_i(t)-\varphi_j(t)\bigg)-\alpha_i=0,\label{eqnv}
\end{aligned}
\end{equation}
where
\begin{equation}\label{alpha}
\alpha_i = \frac{\zeta_i^2}{2}+\sum_{j\in E\setminus\{i\}}\tilde{q}_{ij}\log\frac{\tilde{q}_{ij}}{q_{ij}}-\tilde{q}_{ij}+q_{ij},
\end{equation}
for   $i\in E$ and $t\in[0,\tilde T)$. The  terminal condition is $\varphi_i(\tilde T)=0$ for $i\in E$. This ODE system admits the solution
\begin{equation}\label{phi_solution}
\bm{\varphi}(t) = -\int_t^{\tilde{T}}\exp\left(\bm{\tilde{Q}}(\tilde{T}-s)\right)\bm{\alpha}\,ds\,,
\end{equation}
where $\bm{\tilde{Q}}$ is the generator matrix for $\xi_t$ under measure $\Q$, $
\bm{\varphi}(t) = (\varphi_1(t),\cdots , \varphi_M(t) )'$ and $
\bm{\alpha}=( \alpha_1(t),\cdots , \alpha_M(t) )'$.
With this solution, along with transformation \eqref{ValueFunction},   direction calculations now show that the second-order condition   $\partial_{ww}u_i=\gamma^2 u_i\le 0$ is satisfied. Therefore, the solution to the HJB system (\ref{HJB}) is indeed given by  \eqref{ValueFunction}. In addition, the optimal strategy $\bm{\pi}^*$ can be recovered from  \eqref{MatrixFormStrategy}.

The ODE system \eqref{eqnv} implies a probabilistic representation for $\varphi_i(t)$, given by
\begin{align}\label{viprorep}
 \varphi_i(t)=\E^\Q\bigg[-\int_t^{\tilde T} \alpha(\xi_s)ds \bigg|\xi_t=i\bigg],
\end{align}
where $\alpha(i)=\alpha_i$ is given by \eqref{alpha}. Given  that $x\log x-x+1\ge 0,\,\forall x>0$, it follows that $\alpha(i)\ge 0$ and $\varphi_i(t)\le 0$. Intuitively, $\exp(\varphi_i(t))$ acts like a discounting factor that depends directly on the regime-switching market price of risk.  Since the exponential utility is negative and $\varphi_i(t)\le 0$, this implies that $u_i(t,w)\ge -e^{-\gamma w}$, which means that the investor achieves a higher expected utility by dynamically trading the futures, as compared to only holding the same constant cash amount $w$.

Next, by putting  the optimal strategy $\tilde{\pi}^{(0)*}_i$ and $\tilde{\pi}^{(j)*}_i(t)$ in \eqref{WealthProcess_New_Strategy}, we obtain  the optimal wealth SDE
\begin{equation}
 \begin{aligned}
 dW_t^* &=\frac{1}{\gamma}\bigg(\zeta^2(\xi_t) -\sum_{j\in E\setminus\{\xi_t\}}\bigg(q(\xi_t,j)-\tilde{q}(\xi_t,j)\bigg)\bigg(\log\frac{\tilde{q}(\xi_t,j)}{q(\xi_t,j)}+\varphi(t,\xi_t)-\varphi(t,j)\bigg)\bigg)dt\\
&+\frac{\zeta(\xi_t)}{\gamma}dZ_t^\P-\frac{1}{\gamma}\int_\R\sum_{j\in E\setminus\{\xi_t\}}\bigg(\log\frac{\tilde{q}(\xi_t,j)}{q(\xi_t,j)}+\varphi(t,\xi_t)-\varphi(t,j)\bigg)I_{\{z\in\Delta(\xi_t,j)\}}M^\P(dt,dz),
\end{aligned}
\small \label{WstarSDE}
\end{equation}
where we have denoted $\varphi(t,i)=\varphi_i(t)$, for $i\in E$.
The coefficients of the optimal wealth process in \eqref{WstarSDE} do not  depend on the spot  price $X_t$ but are modulated by the continuous-time Markov chain $\xi_t$. The optimal wealth is connected with the risk-neutral pricing measure and futures prices through the market price of risk $\zeta(\xi_t)$. The investor's risk aversion parameter $\gamma$ also affects the optimal wealth. In particular, a higher $\gamma$ will reduce the magnitude of the drift and volatility of the optimal wealth process.

\begin{remark}
When $M=2$, the ODE system \eqref{eqnv} admits an explicit solution
\begin{equation}\label{PHI_M2}
\begin{aligned}
&\varphi_i(t) = -\frac{1}{\tilde{\lambda}_1+\tilde{\lambda}_2}\bigg((\tilde{\lambda}_2\alpha_1+\tilde{\lambda}_1\alpha_2)(\tilde T-t)+\tilde{\lambda}_i(\alpha_i-\alpha_j)\frac{1-e^{-(\tilde{\lambda}_1+\tilde{\lambda}_2)(\tilde T-t)}}{\tilde{\lambda}_1+\tilde{\lambda}_2}\bigg),
\end{aligned}
\end{equation}
where  $\tilde{\lambda}_i=\tilde{q}_{ij}$, for $(i,j)\in\{(1,2),(2,1)\}$ and    $t\in[0,\tilde T]$. In turn, the optimal strategy is given    by
\begin{equation}\label{OptimalStrategyM2}
\begin{aligned}
&\bcm \pi^{(1)*}_i(t,x) \\ \pi^{(2)*}_i(t,x) \ecm \\
=&-\frac{1}{\gamma b(t,x,i)((F^{(2)}_j(t,x)-F^{(2)}_i(t,x))\partial_xF^{(1)}_i(t,x)-(F^{(1)}_j(t,x)-F^{(1)}_i(t,x))\partial_xF^{(2)}_i(t,x))}\\
&\bcm-\zeta_i(F_j^{(2)}(t,x)-F_i^{(2)}(t,x))-b(t,x,i)\partial_xF_j^{(2)}(t,x)( \log\frac{\tilde{\lambda}_i}{\lambda_i}+\varphi_i(t)-\varphi_j(t))\\\zeta_i(F_j^{(1)}(t,x)-F_i^{(1)}(t,x))+b(t,x,i)\partial_xF_j^{(1)}(t,x)( \log\frac{\tilde{\lambda}_i}{\lambda_i}+\varphi_i(t)-\varphi_j(t))\ecm,
\end{aligned}
\end{equation}
where    $\lambda_i=q_{ij}$, for    $(i,j)\in\{(1,2),(2,1)\}$ and  $(t,x)\in[0,\tilde T]\times \R$. 
\end{remark}

\subsection{Certainty Equivalent}
In order to quantify   the value of trading futures to the investor, we define the investor's certainty equivalent  associated with the utility maximization problem. The certainty equivalent is the guaranteed cash amount   that would yield the same  utility as that from dynamically trading futures according to \eqref{valuefunction}. This amounts to applying the inverse of the utility function to the value function in (\ref{ValueFunction}), that is,
\begin{align}
c_i(t,w) := U^{-1}(u_i(t,w))= w-\frac{\varphi_i(t)}{\gamma}. \label{ce}
\end{align}
Therefore, the certainty equivalent is the sum of the investor's wealth $w$ and a time-dependent component  $-\frac{\varphi_i(t)}{\gamma}$.  From the probabilistic representation  in (\ref{viprorep}), we know that $\varphi_i(t)$ is  negative. The certainty equivalent is also inversely proportional to the risk aversion parameter $\gamma$, which means  that a more risk averse investor has a lower certainty equivalent, valuing the futures trading opportunity less. 

Under our regime-switching framework, the certainty equivalent is the same across different underlying models as long as the market prices of risk stay  the same. Therefore, without picking a specific model, we can still compute the certainty equivalent.   Figure \ref{CertaintyEquivalent} illustrates the certainty equivalents under two risk aversion levels under two regimes in a two-regime market.  Under each regime, the  less risk averse investor ($\gamma=0.5$) has a higher certainty equivalent than the more  risk averse investor ($\gamma=2$). According to the parameters for the two regimes, Regime 2 has a higher market price of risk than regime 1, which explains that the investor's certainty equivalent is higher in regime 2  than in regime 1.  All else being equal, the certainty equivalent is higher when there is more time to trade. Hence, as the trading horizon reduces to zero, the certainty equivalent converges to the initial wealth $w$, which is set to be 1 in this example. This means that the second term in \eqref{ce} converges to zero since $\varphi_i(t)\rightarrow 0$ as $t\rightarrow \tilde{T}$.

\begin{figure}[h]
\centering
\includegraphics[trim=0.5cm   0.4cm  1cm  1cm,clip,width=3.4in]{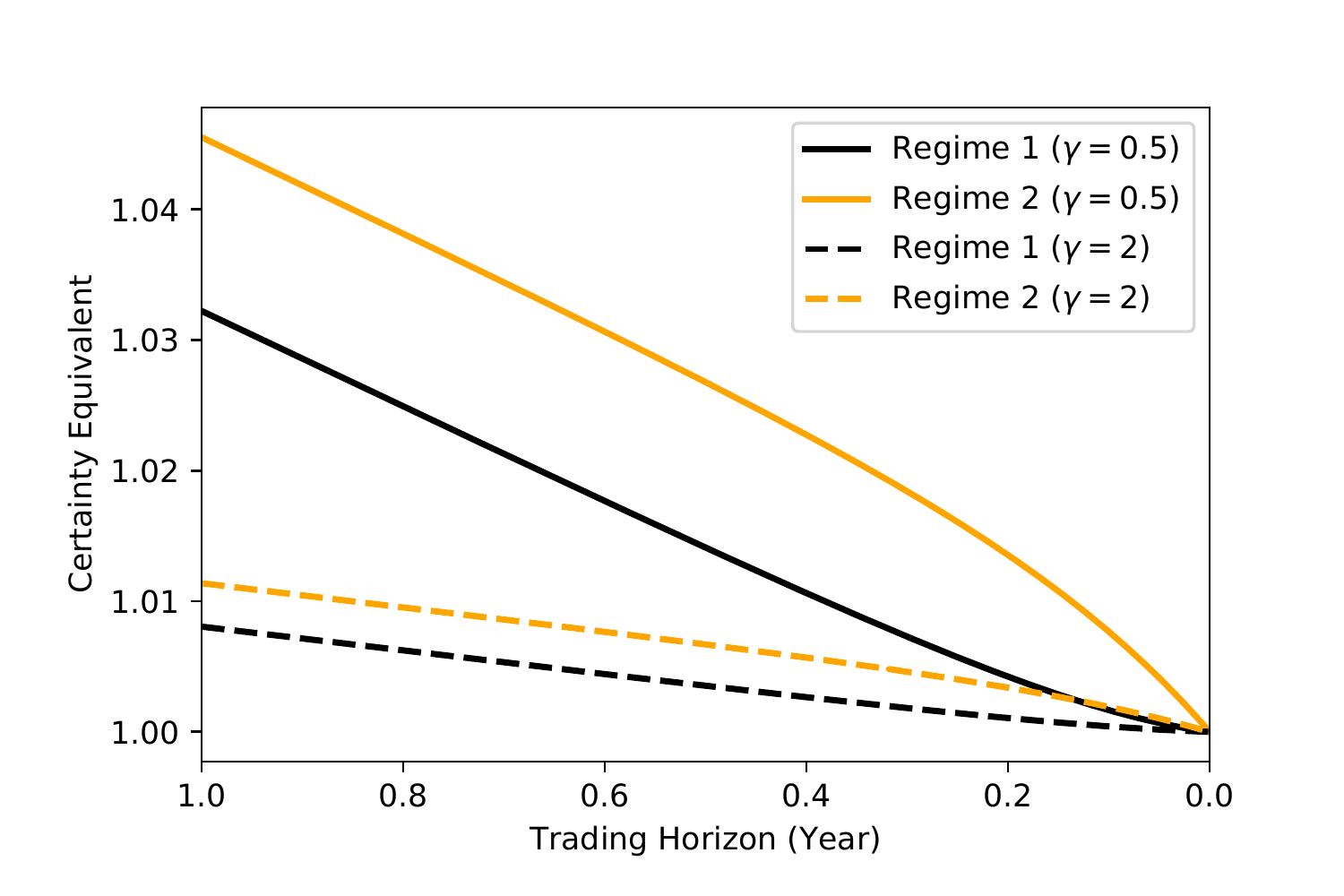}
\caption{The certainty equivalents corresponding to two different risk aversion levels  in a two-regime market. They are plotted as functions of the trading horizon, for  $\gamma=0.5$ (solid lines) and $\gamma = 2$ (dashed lines) in regime 1 (dark color) and regime 2 (light color). Common parameters are  $\tilde{q}_{12}=q_{12} = 0.8$, $\tilde{q}_{21}=q_{21}=0.6$, $w=1$, $\zeta_1=0.1$ and $\zeta_2=0.3$.}\label{CertaintyEquivalent}
\end{figure}
 
 In the next two sections, we consider two model specifications and provide numerical examples.

\newpage
 \section{Regime-Switching Geometric Brownian Motion}\label{sect-GBM}
Suppose the   log-price of the underlying asset follows the SDE
\begin{equation}
dX_t = \mu(\xi_t) dt+\sigma(\xi_t)dZ_t^\Q,
\end{equation} under the risk-neutral  measure $\Q$. We call this model the Regime-Switching Geometric Brownian Motion (RS-GBM) because without $\xi$ the spot price $S$ is simply a GBM. This model belongs to our regime-switching framework discussed in the previous section. Indeed, this amounts to setting the coefficients  in SDE (\ref{log-price}) to be
\begin{align}\tilde a(t,X_t,\xi_t)= \mu(\xi_t), \quad \text{ and } \quad  b(t,X_t,\xi_t) =\sigma(\xi_t).\label{ab}\end{align}

Substituting \eqref{ab} into  (\ref{FuturePDE}), we  obtain the PDE system for the futures price function under this model. Precisely, for $i\in E$, we have
\begin{align}
\partial_t F_i+\mu_i\partial_x F_i+\frac{\sigma_i^2}{2}\partial_{xx} F_i+\sum_{j\in E\setminus \{i\}}\tilde {q}(i,j)( F_j- F_i)=0,
\end{align}
with $\mu_i=\mu(i)$ and $\sigma_i=\sigma(i)$. The terminal condition is $F_i(T,x)=e^x$, $x\in\R$.

Under this model, the  futures price admits the separation of variables:
\begin{equation}
F_i(t,x) =  e^x\,g_i(t),\label{FutureGBM}
\end{equation}
where  $(g_i(t))_{i=1,\ldots, M}$ solve the system of  ODEs:
\begin{align}
 g'_i(t)+(\mu_i+\frac{\sigma_i^2}{2})g_i(t)+\sum_{j\in E\setminus \{i\}} \tilde{q}(i,j)(g_j(t)-g_i(t))=0,\label{eqng}
\end{align}
for $t\in[0,T)$, with the terminal condition $g_i(T)=1$, for $ i=1,\ldots,M$. Defining $\bm{g}(t) = (g_1(t), \ldots, g_M(t))'$,   we can write the solution as
\begin{equation}
\bm{g}(t)= \exp\bigg((\bm{G}+\bm{\tilde Q})(T-t)\bigg)\textbf{1},\label{g_GBM_solution}
\end{equation}
where  $\bm{\tilde Q}$ is the generator matrix under the measure $\Q$ and 
\begin{equation}
\bm{G}=\mbox{diag}(\frac{2\mu_1+\sigma^2_1}{2},\frac{2\mu_2+\sigma^2_2}{2},\ldots,\frac{2\mu_M+\sigma^2_M}{2}).
\end{equation}
In addition, the ODE system \eqref{eqng} implies a probabilistic representation for $g_i(t)$:
\begin{align}\label{g_presentation}
g_i(t)=\E^\Q\bigg[\exp\left(\int_t^{ T} \mu(\xi_s)+\frac{\sigma^2(\xi_s)}{2}ds\right) \bigg|\xi_t=i\bigg].
\end{align}
To sum up, the futures price in regime $i$ is given by 
\begin{equation}
F_i(t,x)=\exp(x)\bigg(\exp\bigg((\bm{G}+\bm{\tilde Q})(T-t)\bigg)\textbf{1}\bigg)_{i},
\label{FuturesGBMSolution}
\end{equation}
where the subscript $i$ denotes the $i$th entry of the vector.


In turn, using \eqref{FutureSDE}, the futures price satisfies the SDE  
\begin{equation}\label{FutureSDEGBM}
\begin{aligned}
dF_t&=\bigg(\sigma(\xi_t)F(t,X_t,\xi_t)\zeta(\xi_t)+\!\!\!\sum_{j\in E\setminus\{\xi_t\}}\!\!\!\!\!(q(\xi_t,j)-\tilde{q}(\xi_t,j))\bigg(F(t,X_t,j)-F(t,X_t,\xi_t)\bigg) \bigg)dt\\
&+\sigma(\xi_t)F(t,X_t,\xi_t)dZ_t^{\P}+\int_\R \sum_{j\in E\setminus\{\xi_t\}}\bigg(F(t,X_t,j)-F(t,X_t,\xi_t)\bigg)I_{\{z\in\Delta(\xi_t,j)\}}M^\P(dt,dz).
\end{aligned}
\end{equation}

Next, we consider a portfolio of futures with $M$ different maturities $T_1<T_2<\cdots<T_M$. If the associated coefficient matrix $\bm{\Gamma}$ is invertible, we can apply results in the Section \ref{sect-utility}. To that end, we have following proposition for coefficient matrix $\bm{\Gamma}$.\\

\begin{proposition}\label{Prop-GBM-Gamma}
If $\bm{\Gamma}(t_0,x,i)$ is invertible for some specific time $t_0\le T_1$, then $\bm{\Gamma}(t,x,i)$ is invertible for any time $t\le T_1$.
\end{proposition}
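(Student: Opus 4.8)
The plan is to exploit the separation of variables $F^{(k)}_i(t,x)=e^x g^{(k)}_i(t)$ available in the RS-GBM model (equation \eqref{FutureGBM}, with $g^{(k)}$ the solution of the corresponding ODE system for maturity $T_k$) to strip the $x$- and $t$-dependence of $\bm{\Gamma}$ into a scalar factor and a matrix-exponential factor, so that $\det\bm{\Gamma}(t,x,i)$ becomes a nowhere-vanishing function of $(t,x)$ multiplied by $\det\bm{\Gamma}(t_0,x,i)$; invertibility at one time then propagates to every time $t\le T_1$.

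First I would substitute $\eta^{(k)}(t,x,i)=\sigma_i e^x g^{(k)}_i(t)$ and $\Delta_F^{(k)}(t,x,i,j)=e^x\big(g^{(k)}_j(t)-g^{(k)}_i(t)\big)$ into the definition \eqref{coefficient_Definition}. This exhibits the $k$-th column of $\bm{\Gamma}(t,x,i)$ as $e^x\,\bm{K}_i\,\bm{g}^{(k)}(t)$, where $\bm{g}^{(k)}(t)=(g^{(k)}_1(t),\dots,g^{(k)}_M(t))'$ and $\bm{K}_i$ is the fixed $M\times M$ matrix of the linear map $(v_1,\dots,v_M)'\mapsto(\sigma_i v_i,\ v_1-v_i,\dots,v_{i-1}-v_i,\ v_{i+1}-v_i,\dots,v_M-v_i)'$. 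Collecting columns, $\bm{\Gamma}(t,x,i)=e^x\,\bm{K}_i\,\bm{H}(t)$, where $\bm{H}(t)$ denotes the $M\times M$ matrix whose $k$-th column is $\bm{g}^{(k)}(t)$; since $T_1\le T_k$ for every $k$, all the $\bm{g}^{(k)}$, and hence $\bm{H}$, are simultaneously defined on $[0,T_1]$.

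Next, from the closed form \eqref{g_GBM_solution} we have $\bm{g}^{(k)}(t)=\exp\big((\bm{G}+\bm{\tilde Q})(T_k-t)\big)\bm{1}$; because matrix exponentials of $\bm{G}+\bm{\tilde Q}$ commute, $\bm{g}^{(k)}(t)=\exp\big((\bm{G}+\bm{\tilde Q})(t_0-t)\big)\bm{g}^{(k)}(t_0)$, so $\bm{H}(t)=\exp\big((\bm{G}+\bm{\tilde Q})(t_0-t)\big)\bm{H}(t_0)$ and therefore
\[
\bm{\Gamma}(t,x,i)=\bm{K}_i\,\exp\!\big((\bm{G}+\bm{\tilde Q})(t_0-t)\big)\,\bm{K}_i^{-1}\,\bm{\Gamma}(t_0,x,i),
\]
the scalar $e^x$ cancelling. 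Taking determinants, $\det\bm{\Gamma}(t,x,i)=\exp\!\big(-(t-t_0)\,\mathrm{tr}(\bm{G}+\bm{\tilde Q})\big)\det\bm{\Gamma}(t_0,x,i)$, and since the scalar factor never vanishes, $\bm{\Gamma}(t,x,i)$ is invertible for every $t\le T_1$ as soon as it is invertible at $t_0$.

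The one point deserving care is the invertibility of $\bm{K}_i$ invoked in the display: $\bm{K}_i$ is invertible exactly when $\sigma_i\neq0$, and if $\sigma_i=0$ then the first row of $\bm{\Gamma}(t_0,x,i)$ is identically zero, contradicting the hypothesis — so that case is vacuous (and it is in any event excluded by the standing positivity of the volatility). Once this is noted, the argument is just multiplicativity of the determinant together with the invertibility of matrix exponentials, and I anticipate no genuine obstacle beyond this bookkeeping.
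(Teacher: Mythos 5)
Your proof is correct, and it reaches exactly the paper's conclusion \eqref{Phi_Relation} by a somewhat different route. The paper forms $\bm{\tilde{\Gamma}}$ by row operations, differentiates its determinant row by row (Jacobi's rule), uses the ODE system \eqref{eqng} to show each row-differentiated determinant equals $-\bigl(\mu_k+\tfrac{\sigma_k^2}{2}+\tilde q(k,k)\bigr)\Phi_i$, and then solves the resulting scalar ODE for $\Phi_i$. You instead factor $\bm{\Gamma}(t,x,i)=e^x\bm{K}_i\bm{H}(t)$ and invoke the closed-form fundamental solution \eqref{g_GBM_solution} together with $\det\exp(\bm{A})=\exp(\mathrm{tr}\,\bm{A})$; since $\mathrm{tr}(\bm{G}+\bm{\tilde Q})=\sum_k\bigl(\mu_k+\tfrac{\sigma_k^2}{2}+\tilde q(k,k)\bigr)$, the two determinant formulas coincide. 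Both arguments are instances of Liouville's/Abel's formula for the linear system \eqref{eqng}; yours buys a cleaner derivation (no hand computation of $\frac{d}{dt}\det$) and makes the exponential rate transparently a trace, while the paper's works directly from the ODEs without needing the explicit matrix exponential. You also make explicit the degenerate case $\sigma_i=0$ (where $\bm{K}_i$ fails to be invertible but the hypothesis is vacuous), which the paper silently assumes away when it divides by $\sigma_i$ to form $\bm{\tilde{\Gamma}}$. One cosmetic caution: your intermediate identity $\bm{\Gamma}(t,x,i)=\bm{K}_i\exp\bigl((\bm{G}+\bm{\tilde Q})(t_0-t)\bigr)\bm{K}_i^{-1}\bm{\Gamma}(t_0,x,i)$ is not strictly needed — taking determinants of $\bm{\Gamma}=e^x\bm{K}_i\bm{H}(t)$ directly already gives the ratio $\det\bm{H}(t)/\det\bm{H}(t_0)$ — but it is harmless and correct as written.
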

\begin{proof}
 By \eqref{coefficient_Definition}, the coefficient matrix is given by
\begin{equation}
\bm{\Gamma}(t,x,i)=e^x\bcm \sigma_ig^{(1)}_i(t)&\sigma_ig^{(2)}_i(t)&\cdots& \sigma_ig^{(M)}_i(t)\\
 g^{(1)}_1(t)-g^{(1)}_i(t)&g^{(2)}_1(t)-g^{(2)}_i(t)&\cdots&g^{(M)}_1(t)-g^{(M)}_i(t)\\
 \vdots& \vdots&\vdots &\vdots\\
  g^{(1)}_{i-1}(t)-g^{(1)}_i(t)&g^{(2)}_{i-1}(t)-g^{(2)}_i(t)&\cdots&g^{(M)}_{i-1}(t)-g^{(M)}_i(t)\\
  g^{(1)}_{i+1}(t)-g^{(1)}_i(t)&g^{(2)}_{i+1}(t)-g^{(2)}_i(t)&\cdots&g^{(M)}_{i+1}(t)-g^{(M)}_i(t)\\
  \vdots& \vdots&\ddots &\vdots\\
  g^{(1)}_M(t)-g^{(1)}_i(t)&g^{(2)}_M(t)-g^{(2)}_i(t)&\cdots&g^{(M)}_M(t)-g^{(M)}_i(t)\ecm,
\end{equation}
for $i\in E$.
The determinant of coefficient matrix $\bm{\Gamma}$ is denoted by 
\begin{align}
\Phi_i(t,x)&=\mbox{det }\bm{\Gamma}(t,x,i),  \qquad i\in E.
\end{align}

We define the matrix $\bm{\tilde{\Gamma}}$ by adding $1/\sigma_i$ of the first row to other rows in matrix $\bm{\Gamma}$. Then, we define the matrix $\bm{H_k}$ by taking $t$-derivative in the $k$-th row of matrix $\bm{\tilde{\Gamma}}$. For example,
\begin{equation}
\bm{H_1}(t,x,i)=e^x\bcm \sigma_i\frac{d}{dt}g^{(1)}_i(t)&\sigma_i\frac{d}{dt}g^{(2)}_i(t)&\cdots& \sigma_i\frac{d}{dt}g^{(M)}_i(t)\\
 g^{(1)}_1(t)&g^{(2)}_1(t)&\cdots&g^{(M)}_1(t)\\
 \vdots& \vdots&\vdots &\vdots\\
  g^{(1)}_{i-1}(t)&g^{(2)}_{i-1}(t)&\cdots&g^{(M)}_{i-1}(t)\\
  g^{(1)}_{i+1}(t)&g^{(2)}_{i+1}(t)&\cdots&g^{(M)}_{i+1}(t)\\
  \vdots& \vdots&\ddots &\vdots\\
  g^{(1)}_M(t)&g^{(2)}_M(t)&\cdots&g^{(M)}_M(t)\ecm.
\end{equation}
Then, holding the log-price $x$ fixed, we differentiate to get
\begin{equation}\label{Phi_H_k}
\frac{d}{dt}\Phi_i(t,x) = \frac{d}{dt} \det \bm{ \tilde{\Gamma}}(t,x,i) = \sum_{k=1}^M \det \bm{H_k}(t,x,i).
\end{equation}
Applying \eqref{eqng} and  $\sum_{j\in E\setminus\{i\}}\tilde{q}(i,j)=-\tilde{q}(i,i)$, we get
\begin{equation}\label{H_k_Phi}
\sum_{k=1}^M\det \bm{H_k}(t,x,i) =\sum_{k=1}^M-\left(\mu_k+\frac{\sigma_k^2}{2}+\tilde{q}(k,k)\right)\Phi_i(t,x),
\end{equation}
for $i \in E$.

Combining \eqref{Phi_H_k} and \eqref{H_k_Phi},   we have
\begin{equation}
\frac{d}{dt}\Phi_i(t,x)+ \sum_{k=1}^M\bigg(\mu_k+\frac{\sigma_k^2}{2}+\tilde{q}(k,k)\bigg)\Phi_i(t,x)= 0.
\end{equation}
Then, for any $t_0,t\le T_1$,  $\Phi_i(t,x)$ satisfies
\begin{equation}\label{Phi_Relation}
\Phi_i(t,x) = \exp\bigg(-\sum_{k=1}^M\bigg(\mu_k+\frac{\sigma_k^2}{2}+\tilde{q}(k,k)\bigg)(t-t_0)\bigg)\Phi_i(t_0,x). 
\end{equation}
Thus, if $\bm{\Gamma}(t_0,x,i)$ is invertible for some specific time $t_0\le T_1$, then $\bm{\Gamma}(t,x,i)$ is invertible for any time $t\le T_1$.
\end{proof}\\

\begin{example}
Assume a market with two regimes, i.e. $E=\{1,2\}$. The coefficient matrix $\bm{\Gamma}$ for futures pair $(F^{(1)},F^{(2)})$ is given explicitly by
\begin{equation}
\bm{\Gamma}(t,x,i)=e^x\bcm \sigma_ig^{(1)}_i(t)&\sigma_ig^{(2)}_i(t)\\
 g^{(1)}_j(t)-g^{(1)}_i(t)&g^{(2)}_j(t)-g^{(2)}_i(t)
 \ecm,
\end{equation}
for $(i,j)\in\{(1,2),(2,1)\}$. Then, the matrix determinant $\Phi_i(t,x)$ is also explicit:
\begin{equation}
\Phi_i(t,x) = e^{2x}\sigma_i(g^{(1)}_i(t)g^{(2)}_j(t)-g^{(1)}_j(t)g^{(2)}_i(t)),
\end{equation}
for $(i,j)\in\{(1,2),(2,1)\}$. 

Applying Proposition \ref{Prop-GBM-Gamma}, the coefficient matrix $\bm{\Gamma}$ is invertible for any time $t\le T_2$, if and only if $\Phi_i(T_2,x)=e^{2x}\sigma_i(g^{(2)}_j(T_2)-g^{(2)}_i(T_2))\neq 0$, which is equivalent to  $\mu_1+\sigma_1^2/2 \neq \mu_2+\sigma_2^2/2$ accorinding to the probabilistic representation \eqref{g_presentation}.

If $\mu_1+\sigma_1^2/2 \neq \mu_2+\sigma_2^2/2$, we can apply the results in Section \ref{sect-utility}. The value function $u_i(t,w)$ satisfies
\begin{equation}
u_i(t,w)=-e^{-\gamma w+\varphi_i(t)},
\end{equation}
where $\varphi_i(t)$ is given by \eqref{PHI_M2}. Applying \eqref{OptimalStrategyM2}, we immediately obtain the optimal strategy 
\begin{equation}
\begin{aligned}
\bcm \pi^{(1)*}_i(t,x) \\ \pi^{(2)*}_i(t,x) \ecm =&-\frac{e^{-x}}{\gamma\sigma_i(g_i^{(1)}(t)g_j^{(2)}(t)-g_j^{(1)}(t)g_i^{(2)}(t))} \\
&\bcm-\zeta_i(g_j^{(2)}(t)-g_i^{(2)}(t))-\sigma_ig_i^{(2)}(t)\left(\log\frac{\tilde{\lambda}_i}{\lambda_i}+\varphi_i(t)-\varphi_j(t)\right)\\ \zeta_i(g_j^{(1)}(t)-g_i^{(1)}(t))+\sigma_ig_i^{(1)}(t)\left(\log\frac{\tilde{\lambda}_i}{\lambda_i}+\varphi_i(t)-\varphi_j(t)\right)\ecm,
\end{aligned}
\label{OSGBM}
\end{equation}
for  $(i,j)\in\{(1,2),(2,1)\}$ and $(t,x)\in[0,\tilde T]\times \R$.

If $\mu_1+\sigma_1^2/2 = \mu_2+\sigma_2^2/2$, the futures prices will be the same in two states according to  \eqref{g_presentation}. Therefore, the futures price SDE  becomes 
\begin{equation}
 \begin{aligned}
&dF_t=\sigma(\xi_t)F(t,X_t,\xi_t)\zeta(\xi_t)dt+\sigma(\xi_t)F(t,X_t,\xi_t)dZ_t^{\P},\end{aligned}
 \end{equation} 
which, in contrast to \eqref{FutureSDEGBM},  has continuous paths. 

 
%

\end{example}

\newpage
\subsection{Numerical Illustration}\label{Numerical_GBM}
We simulate the sample paths for spot price, futures price, optimal investment, and optimal wealth process. The regime switching between two states, with transition probabilities  $q_{12}=2$ and $q_{21}=4$ which are entries of generator matrix $\bm{Q}$.  The trading horizon $\tilde{T}=0.6$ which is no greater than the maturity of the futures contract $T_1=0.6$ and $T_2=0.8$. All parameters   are summarized in the Table \ref{Parameters_Table_GBM}.\\

\begin{table}[h]
\centering
\begin{tabular}{c|c|c|c|c|ccccccccc}
\hline
$\tilde q_{12}=q_{12}$& $\tilde q_{21}=q_{21}$& $\zeta(1)$& $\zeta(2)$&
 $\gamma$& $\tilde T$\\
\hline
$2$&$4$& 0.1 & 0.3  & 1& 0.6\\
\hline
\hline

$\mu_1$& $\mu_2$&$\sigma_1$& $\sigma_2$&
$T_1$ & $T_2$  \\
\hline
-0.2 & 0.2& 0.2& 0.3& 0.6& 0.8 \\
\hline
\end{tabular}
\caption{Parameters for the RS-GBM model for Figures \ref{spot_path_GBM}--\ref{Phi_path_GBM}. }
\label{Parameters_Table_GBM}
\end{table}

As shown in Figure \ref{spot_path_GBM}, the market starts in regime 2, then switches  to regime 1  at time $t_1$ before returning to regime 2 at time $t_2$. Under RS-GBM  model, the futures price (see \eqref{FuturesGBMSolution}) tends to  amplify the   spot price. Thus, the futures price is   volatile relative to the spot price. Moreover, each regime switch can cause an instant jump in the  futures price but not in the spot price.  Since the trading horizon coincides with the maturity of the $T_1$-futures, the corresponding futures price  converges to the spot price towards the end.  \\

In Figure \ref{position_path_GBM}, we illustrate the  sample paths of the optimal positions in  the two futures over the trading horizon. As we can see, the  optimal positions are of opposite signs, meaning that the investor will go long on the $T_1$-Futures and short on the $T_2$-Futures. Long-short strategies are very common in futures trading, and they can help mitigate  the effect of regime switching. As the market switches from regime 2 to regime 1 at time $t_1$, the magnitude of the futures position is  reduced immediately.   The investor then take larger long-short positions when the market returns to regime 2 from regime 1 at time $t_2$.  According to the sample paths, the optimal positions tend to decay in time  during each regime, meaning that the investor gradually reduces investments towards the end of the trading horizon.  \\

Figure \ref{wealth_path_GBM} shows the  wealth process corresponding to the optimal trading strategy in  Figure \ref{position_path_GBM}. The long-short strategy tends to reduce the shock in portfolio vlaue due to  regime switching. Therefore, no  sizable jumps in the wealth process are observed at the regime-swtiching times $t_1$ and $t_2$.  As we can see, the   wealth process decreases slightly  in regime 1. This is intuitive since regime 1 has lower risk premium than in regime 2. The wealth increases sharply in regime 2  towards the end of the trading horizon, even though   the positions are gradually reduced over time.  \\

 In Figure \ref{Phi_path_GBM}, we plot the sample path for the absolute value of coefficient matrix determinant  $|\Phi|$ corresponding to the sample paths of the spot and futures prices.  Recall from   \eqref{Phi_Relation},  $|\Phi|$ is exponentially increasing or decreasing overtime. Moreover, in our numerical settings, the term $\tilde{q}(i,i)=-\tilde{q}(i,j)$ dominates the term $\mu_i+\sigma_i^2/2$. Therefore, $|\Phi|$ appears to be  exponentially increasing in time. Since $|\Phi|$ is not equal to 0, $\Gamma$ is invertible and we are able to apply the results in the Section \ref{sect-utility} for the RS-GBM model.

\begin{figure}
\centering
\includegraphics[trim=0.5cm   0.4cm  1.5cm  1.7cm,clip,width=3.5in]{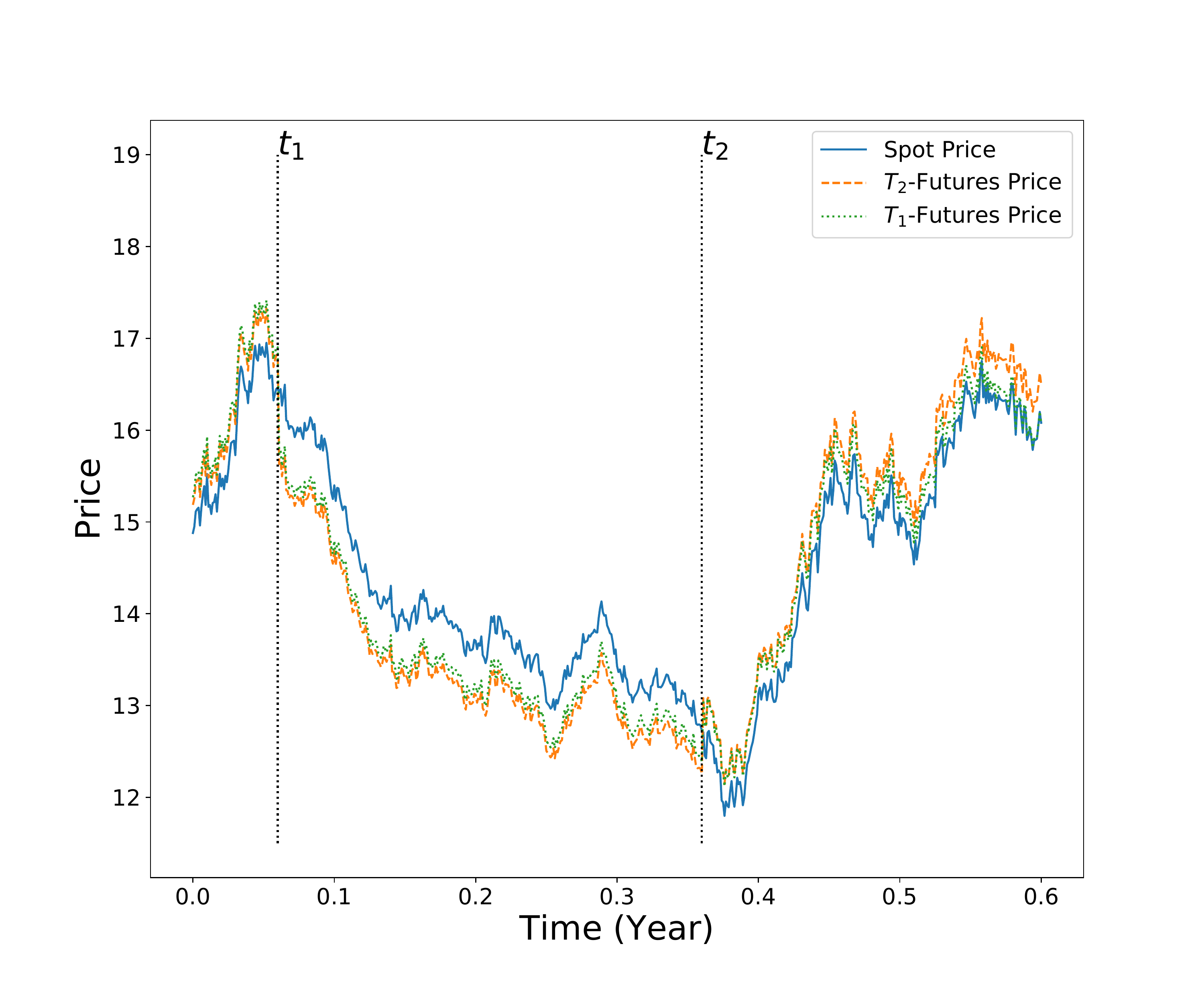}
\caption{Sample paths of the spot price, $T_1$-futures price, and $T_2$-futures price  over the trading horizon under the RS-GBM model. The market starts in regime 2, then switches to regime 1 at time $t_1$, before switching back to regime 2 at time $t_2$.   }\label{spot_path_GBM}
\end{figure}

\begin{figure}
\centering
\includegraphics[trim=0.5cm   0.4cm  1.5cm  1.7cm,clip,width=3.3in]{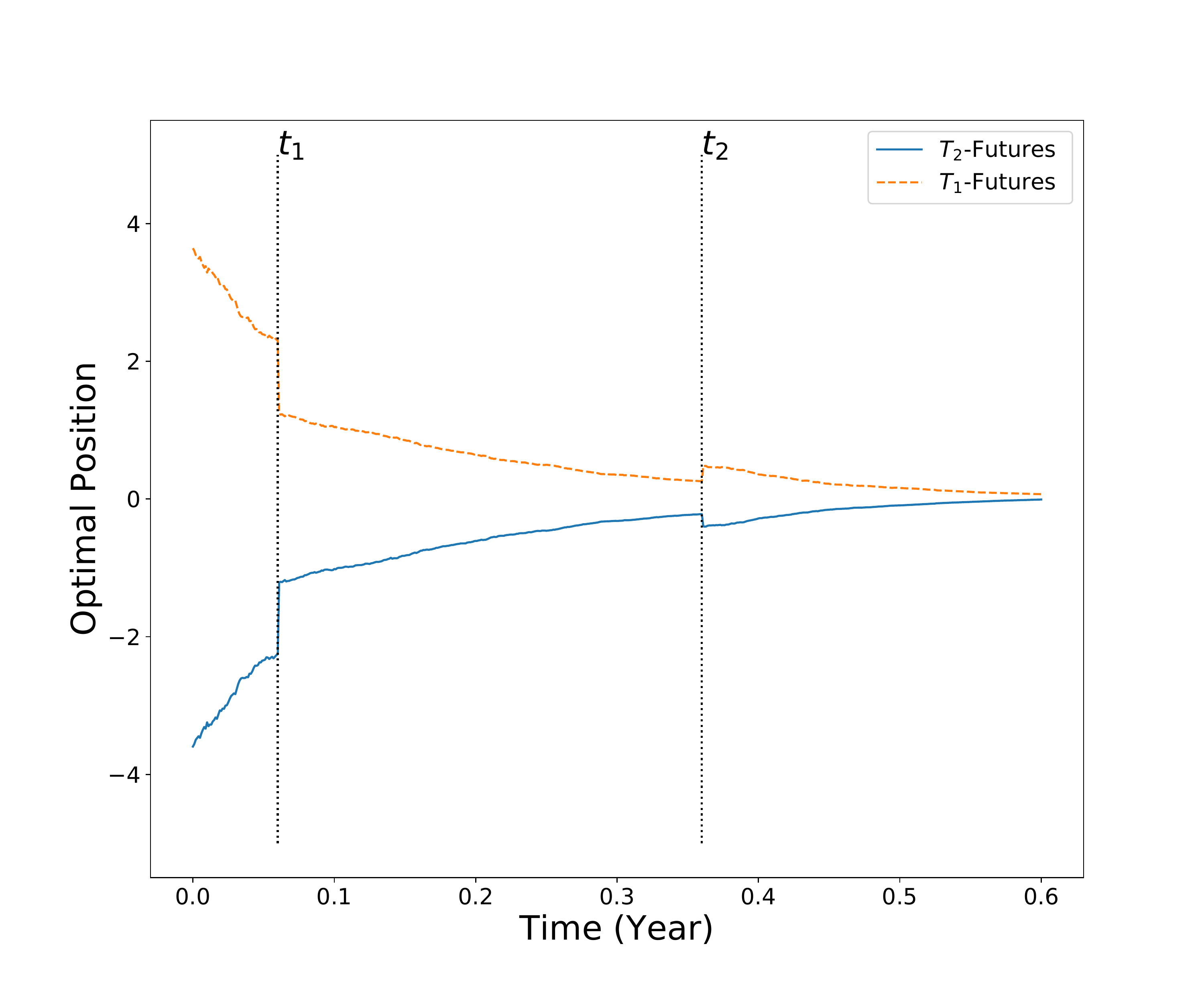}
\caption{Sample paths of the optimal positions ($\pi^{(1)},\pi^{(2)}$) in the  $T_1$-futures and $T_2$-futures respectively  under the RS-GBM model.  The futures positions tend to decrease over time and approaches zero near the end of the trading horizon. Jumps in both  positions occur at the regiem-switching times $t_1$ and $t_2$. }\label{position_path_GBM}
\end{figure}

\clearpage
\begin{figure}
\centering
\includegraphics[trim=0.5cm   0.4cm  1.5cm  1.7cm,clip,width=3.3in]{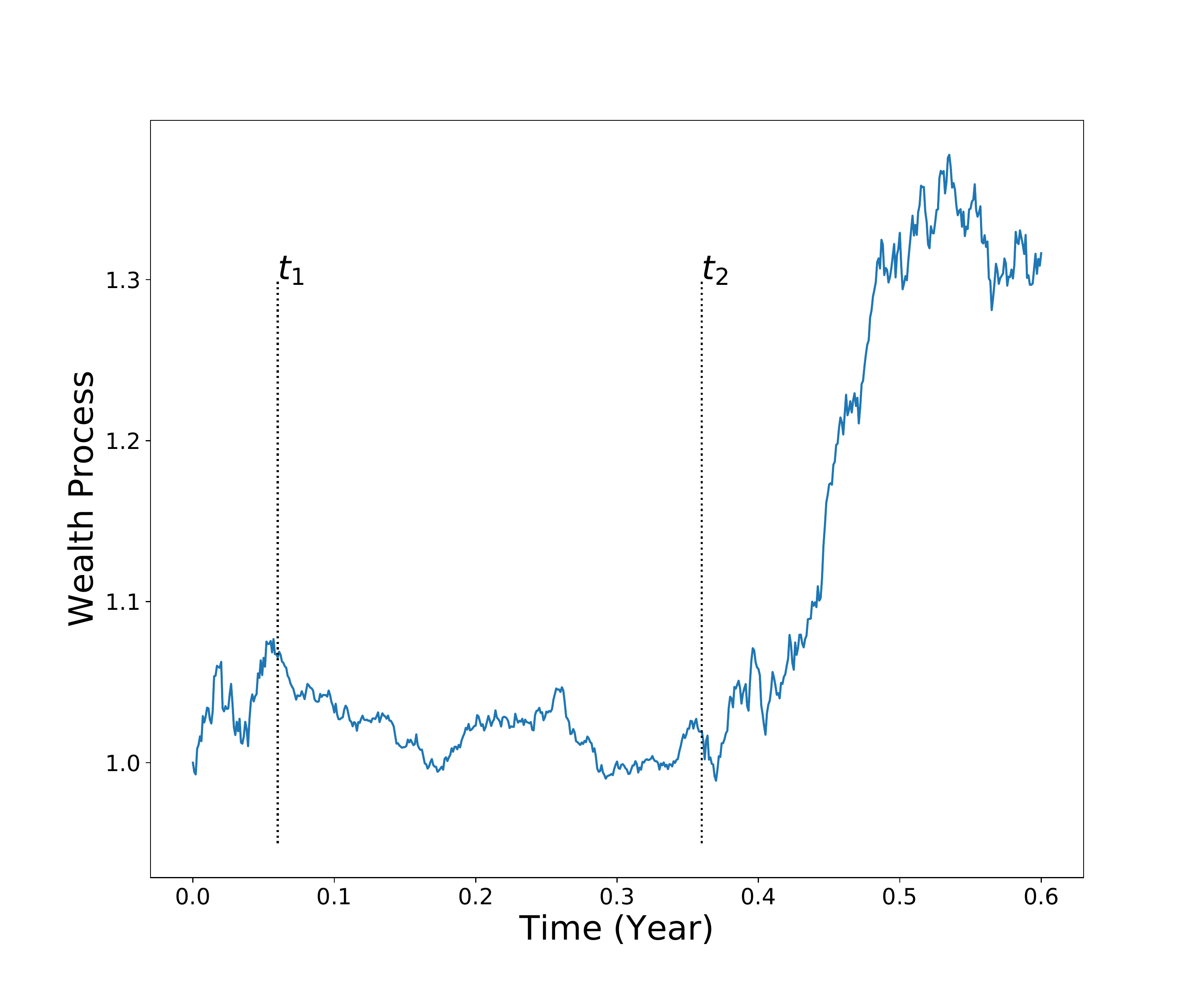}
\caption{Sample path of the optimal wealth process over the trading horizon under the RS-GBM model.}\label{wealth_path_GBM}
\end{figure}




\begin{figure}
\centering
\includegraphics[trim=0.5cm   0.4cm  1.5cm  1.7cm,clip,width=3.3in]{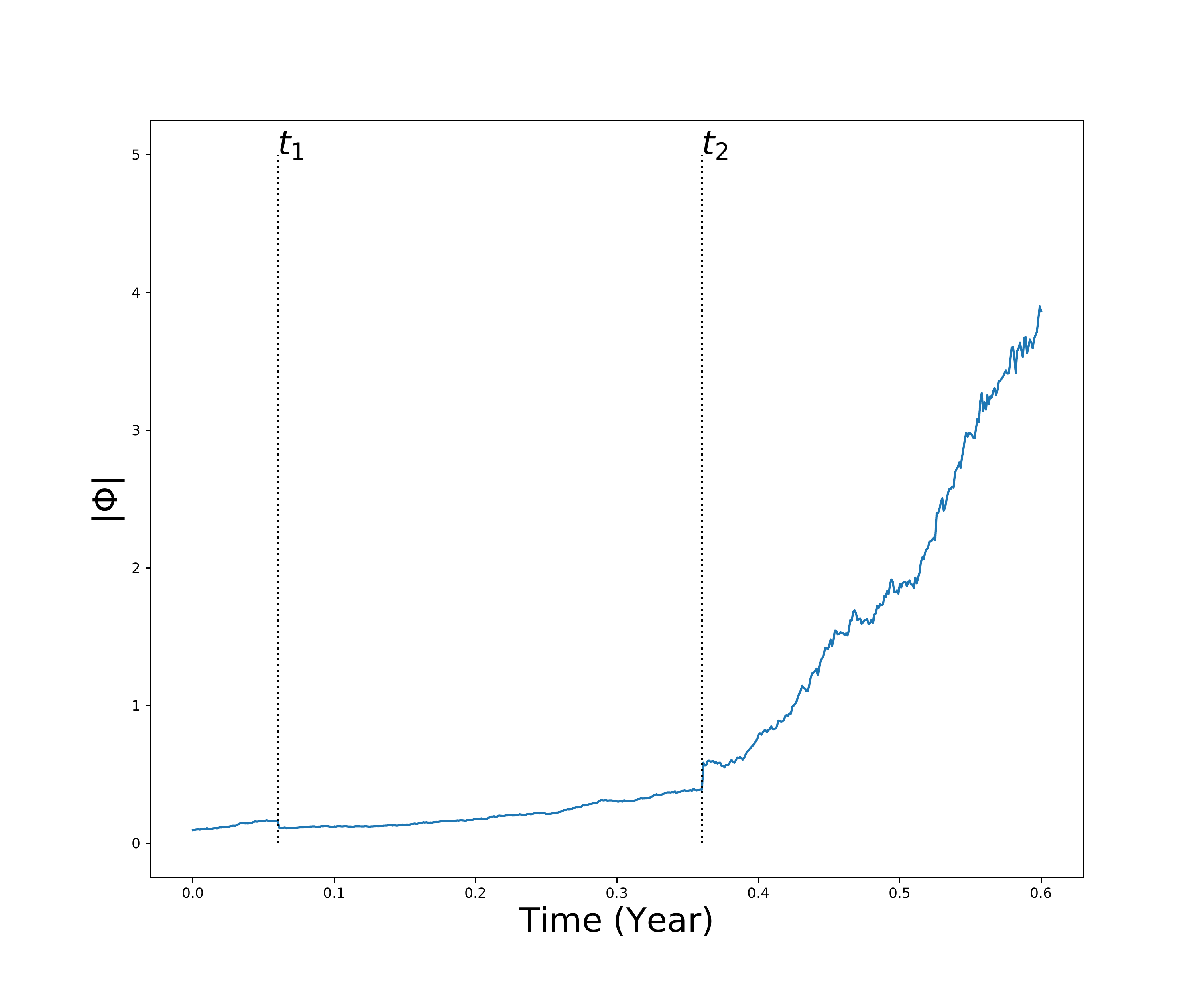}
\caption{Sample path of the absolute value of the determinant $\Phi$ for the coefficient matrix $\Gamma$ under the RS-GBM model.}\label{Phi_path_GBM}
\end{figure}

 \clearpage
\section{Regime-Switching Exponential Ornstein-Uhlenbeck Model}\label{sect-EXP-OU}
  As is well known, the Exponential Ornstein-Uhlenbeck  process and its variations are widely used to model   commodity prices. We now consider  the Regime-Switching Exponential Ornstein-Uhlenbeck (RS-XOU) model and illustrate the optimal trading strategies under this model.  In the  RS-XOU model, the   log spot price   evolves according to
\begin{equation}
dX_t = \kappa(\xi_t)(\theta(\xi_t)-X_t)dt+\sigma(\xi_t)dZ_t^\Q,\label{OU}
\end{equation}
where $\kappa(\xi_t)$, $\theta(\xi_t)$ and $\sigma(\xi_t)$ are the functions of regimes. This amounts to setting $\tilde a(t,X_t,\xi_t)=\kappa(\xi_t)(\theta(\xi_t)-X_t)$ and $b(t,X_t,\xi_t)=\sigma(\xi_t)$ in    (\ref{log-price}).


\subsection{Futures Dynamics and Utility Maximization}
The  futures price function $F_i(t,x)$ satisfies   PDE (\ref{FuturePDE}). Substituting (\ref{OU}) into (\ref{FutureSDE}), we obtain  the $\mathbb{Q}$-dynamics for futures price $F_t$:
\begin{equation}
\begin{aligned}
dF_t &=\eta(t,X_t,\xi_t)dZ_t^\Q+\int_\R \sum_{j\in E\setminus\{\xi_t\}}\bigg(F(t,X_t,j)-F(t,X_t,\xi_t)\bigg)I_{\{z\in\Delta(\xi_t,j)\}}M^\Q(dt,dz),
\end{aligned}
\end{equation}
where the volatility term is given by
\begin{equation}
\eta(t,x,i) = \sigma_i\partial_x F_i(t,x).
\end{equation}
In addition, under the measure $\P$,
\begin{equation}
\begin{aligned}
 dF_t&=\bigg(\zeta(\xi_t)\eta(t,X_t,\xi_t)+\sum_{j\in E\setminus\{\xi_t\}}(q(\xi_t,j)-\tilde{q}(\xi_t,j))\bigg(F(t,X_t,j)-F(t,X_t,\xi_t)\bigg) \bigg)dt\\
&+\eta(t,X_t,\xi_t)dZ_t^{\P}+\int_\R \sum_{j\in E\setminus\{\xi_t\}}\bigg(F(t,X_t,j)-F(t,X_t,\xi_t)\bigg)I_{\{z\in\Delta(\xi_t,j)\}}M^\P(dt,dz).
\end{aligned}
\end{equation}

\begin{example}
We consider a portfolio of futures with two different maturities $T_1$ and $T_2$ in a two-state   market, i.e. $E=\{1,2\}$. The associated coefficient matrix is given by 
\begin{equation}
\bm{\Gamma}(t,x,i)=\bcm \sigma_i\partial_x F^{(1)}_i(t,x)& \sigma_i\partial_x F^{(2)}_i(t,x)\\ F^{(1)}_j(t,x)-F^{(1)}_i(t,x)&F^{(2)}_j(t,x)-F^{(2)}_i(t,x) \ecm,
\end{equation}
for $(i,j)\in\{(1,2),(2,1)\}$.
Moreover, we assume $\bm{\Gamma}(t,x,i)$ be invertible. Under this assumption, we can apply the result in Section \ref{sect-utility}. The value function $u_i(t,w)=-e^{-\gamma w+\varphi_i(t)}$, where $\varphi_i(t)$ is given by \eqref{PHI_M2}, holds. 

Applying  \eqref{OptimalStrategyM2}, we immediately obtain the optimal strategy 
\begin{equation}\label{OptimalStrategyOU}
\begin{aligned}
\bcm \pi^{(1)*}_i(t,x) \\ \pi^{(2)*}_i(t,x) \ecm =-\frac{1}{\gamma\sigma_i((F^{(2)}_j(t,x)-F^{(2)}_i(t,x))\partial_xF^{(1)}_i(t,x)-(F^{(1)}_j(t,x)-F^{(1)}_i(t,x))\partial_xF^{(2)}_i(t,x))} \\
 \bcm-\zeta_i(F_j^{(2)}(t,x)-F_i^{(2)}(t,x))-\sigma_i\partial_xF_j^{(2)}(t,x)( \log\frac{\tilde{\lambda}_i}{\lambda_i}+\varphi_i(t)-\varphi_j(t))\\\zeta_i(F_j^{(1)}(t,x)-F_i^{(1)}(t,x))+\sigma_i\partial_xF_j^{(1)}(t,x)( \log\frac{\tilde{\lambda}_i}{\lambda_i}+\varphi_i(t)-\varphi_j(t))\ecm,
\end{aligned}
\end{equation}
for $(i,j)\in \{(1,2),(2,1)\}$ and  $(t,x)\in[0,\tilde T]\times \R$. 
\end{example}

\begin{remark}\label{remark1}
If the speed of mean reversion is identical in all regimes, i.e. $\kappa_i=\kappa$ $\forall  i$, then the futures price admits the following form
\begin{equation}
F_i(t,x) = \exp\bigg(e^{- \kappa (T-t)}x\bigg)h_i(t),\label{FutureOU}
\end{equation}
where $(h_i(t))_{i=1,\ldots, M}$ solves the ODE system 
\begin{align}
h'_i(t)+ \bigg(\kappa\theta_ie^{-\kappa(T-t)}+\frac{\sigma^2_i}{2}e^{-2\kappa(T-t)}\bigg)h_i(t)+\sum_{j\in E\setminus\{ i\}} \tilde q(i,j)(h_j(t)-h_i(t))=0,\label{eqnh}
\end{align}
with the terminal condition $h_i(T)=1$, for $i=1,\ldots,M$.
 
The price formula \eqref{FutureOU} means that $\eta(t,x,i) = \sigma_i e^{-\kappa (T-t)}F_i(t,x)$ in (\ref{FutureSDE}). As a result,  the futures price satisfies the SDE
\begin{equation}
\begin{aligned}
 dF_t&=\sigma(\xi_t) e^{-\kappa (T-t)}\exp\bigg(e^{- \kappa (T-t)}X_t\bigg)h(t,\xi_t)dZ_t^\Q\\
&+\exp\bigg(e^{- \kappa (T-t)}X_t\bigg)\int_\R \sum_{j\in E\setminus\{\xi_t\}}\bigg(h(t,j)-h(t,\xi_t)\bigg)I_{\{z\in\Delta(\xi_t,j)\}}M^\Q(dt,dz), 
\end{aligned}
\end{equation}
where we have denoted $h(t,i)\equiv h_i(t)$. 
\end{remark}

\subsection{Numerical Method for Futures Price}\label{sect-numerical}
We now discuss the numerical procedue to compute  the futures price under the RS-XOU model in this section. We have computed and checked our prices using the finite difference method (FDM) and fast fourier transform (FFT).
\subsubsection{Finite Difference Method (FDM)}
We first re-write  futures price  PDE (\ref{FuturePDE}) in terms of $s=e^x$   to get
\begin{align}
\partial_t F_i+\bigg(\kappa_i(\theta_i-\ln s)+\frac{\sigma_i^2}{2}\bigg)s\partial_s F_i+\frac{\sigma^2_is^2}{2}\partial_{ss} F_i+\sum_{j\in E\setminus\{ i\}} \tilde q(i,j)( F_j- F_i)=0,\label{eqns}
\end{align}
for $(t,s)\in[0,T)\times\R_+$, with the terminal condition $F_i(T,s)=s$. Then we can apply the Crank-Nicolson method to equation (\ref{eqns}) directly with Dirichlet boundary conditions. The method is standard, so we omit the details here.

\subsubsection{Fourier Space Time-stepping (FST)}
 We begin first by discretizing the continuous-time Markov chain $\xi_t$  with time step of  size $\delta t$, and we  keep $\xi_t$ constant on each time interval  $(t_n,t_{n+1}]$ $(t_n=n\delta t)$, for $n=0,\ldots,T/\delta t-1$, with transition probabilities
\begin{equation}
P_{kl}:=
\left\{
\begin{aligned}
&1+\tilde q_{ll}\delta t, &k=l,\\
&\tilde q_{kl}\delta t, &otherwise.
\end{aligned}
\right.
\end{equation}
In turn, the futures price   satisfies the recursive relation
\begin{equation}
F_i(t_n,x)=\sum_{j=1,\ldots,M}P_{ij}F_j(t_{n+},x),\label{futures_sum}
\end{equation}
where $F_j(t_{n+},x) = \lim_{t\downarrow t_n}F_j(t,x)$.

Since we assume Markov chain $\xi_t$ stay constant on time intervals $(t_n,t_{n+1}]$,  the martingale property of futures price  implies that
\begin{equation}
F_i(t_{n+},x)=\E^\Q[F_i(t_{n+1},X_{t_{n+1}})|X_{t_n}=x],
\end{equation}
for $i=1,\ldots,M$, and $n=0,\ldots,T/\delta t-1$. Therefore, we have following PDE for the futures price within each time interval $(t_n,t_{n+1}]$ and regime $i$,
\begin{align}
\partial_t F_i+\kappa_i(\theta_i-x)\partial_x F_i+\frac{\sigma^2_i}{2}\partial_{xx}F_i=0.\label{eqnfft}
\end{align}

Next, we apply Fourier transform to PDE (\ref{eqnfft}). To that end, we define
\begin{equation}
\F[f](\omega)=\int_{-\infty}^{\infty}f(x)\exp(-j\omega x)dx,\label{Fourier}
\end{equation}
where $j$ denotes the imaginary identity and $\omega$ denotes the frequency. Then, we obtain a  first-order PDE for $\hat{F}:= \F[F]$:
\begin{equation}
\partial\hat F_i+\kappa_i\omega\hat\partial_\omega F_i+\bigg(\kappa_i\theta_i j \omega+\kappa_i-\frac{\omega^2\sigma_i^2}{2} \bigg)\hat F_i=0,\label{eqnfft1}
\end{equation}
for $(t,\omega)\in[0,T)\times \mathbb{R}$, where $\hat F(t,\omega)\equiv \F[F](t,\omega)$. Noted that we use the following property of the Fourier transform to derive PDE (\ref{eqnfft1}),
\begin{equation}
\F[xf_x]=-\F[f]-\omega\F_\omega[f].
\end{equation}
To solve PDE (\ref{eqnfft1}) we employ the method of characteristics to get  
\begin{align}
\hat F_i(t_n+,\omega) = \phi_i(\delta t,\omega)\hat F_i(t_{n+1},e^{\kappa_i\delta t}\omega),\label{FFT_Solution}
\end{align}
with 
\begin{equation}
\phi_i(\delta t,\omega)=\exp\bigg(\kappa_i\delta t-\theta_i j\omega(1-e^{\kappa_i\delta t})+\frac{\sigma_i^2\omega^2}{4\kappa_i}(1-e^{2\kappa_i\delta t})\bigg).\notag
\end{equation}


Combining equation (\ref{futures_sum}) and (\ref{FFT_Solution}), we have following backward equation in the frequency space,
\begin{equation}
\hat F_i(t_n,\omega)=\sum_{j=1,\ldots,M}P_{ij}\phi_j(\delta t,\omega)\hat F_j(t_{n+1},e^{\kappa_j\delta t}\omega),\label{FFT_Backward}
\end{equation}
for $i=1,\ldots,M$, and $n=0,\ldots,T/\delta t-1$.
We then apply   backward induction via (\ref{FFT_Backward}) to calculate $\hat{F}_i(t,\omega)$. To recover the original futures price function, we apply inverse Fourier transform. For the numerical implementation of  Fourier transform and inverse Fourier transform, we utilize the standard fast Fourier transform algorithm. This FST method has been applied more broadly by \cite{JacksonJaimungalSurkov2008} to solve partial-integro differential equations (PIDEs) that arise in options pricing problems.
 For more details and other applications of this numerical approach, we refer to  \cite{SurkovThesis} and \cite{JacksonJaimungalSurkov2008}, and references therein. 
 


\subsection{Numerical Example}\label{Numerical_XOU}
We simulate the sample paths for spot price, futures prices, optimal investment, and optimal wealth process. The regime switching between two states, with transition probabilities  $q_{12}=2$ and $q_{21}=4$ which are entries of generator matrix $\bm{Q}$.  The trading horizon $\tilde{T}=0.6$, and the two futures contracts have maturities   $T_1=0.6$ and $T_2=0.8$. All parameters are summarized in the Table \ref{Parameters_Table}.\\

\begin{table}[H]
\centering
\begin{tabular}{c|c|c|c|c|c|cccccccc}
\hline
$\tilde q_{12}=q_{12}$& $\tilde q_{21}=q_{21}$& $\zeta(1)$& $\zeta(2)$&
$\kappa_1$& $\kappa_2$ & $\gamma$\\
\hline
$2$&$4$& 0.1 & 0.3 & 1 & 2 & 1\\
\hline
\hline

$\theta_1$& $\theta_2$&$\sigma_1$& $\sigma_2$&
$T_1$ & $T_2$ & $\tilde{T}$\\
\hline
2.5 & 2.7& 0.2& 0.3& 0.6& 0.8 & 0.6\\
\hline
\end{tabular}
\caption{Parameters for the RS-XOU model for Figures \ref{spot_path}--\ref{Phi_path}.} 
\label{Parameters_Table}
\end{table}
\vspace{5pt}

For the sample paths shown in Figure \ref{spot_path}, the market starts in regime  2, then switches  to regime 1  at time $t_1$) before returning to regime 2  at time $t_2$. The two price  levels $\exp(\theta_1)$ and $\exp(\theta_2)$ are the long-run means of the spot price  in regimes 1 and 2 respectively. In each regime, the spot price tends to  move towards the corresponding mean level. The spot and futures prices tend to move in tandem. Nevertheless, like in  Figure \ref{spot_path_GBM}, each regime switch can cause an instant jump in the  futures price but not in the spot price.  \\







Figure \ref{position_path} shows  the sample path of the optimal positions in  the two futures over the trading horizon. The optimal strategy is to long $T_1$-Futures and short $T_2$-Futures. The
long-short positions help reduce the effect of regime switching. As the market switches from regime 2 to regime 1 at time $t_1$, the magnitude of the futures position is immediately reduced.  The investor then take larger positions when the market returns to regime 2 from regime 1 at time $t_2$.  According to the sample paths, the optimal positions tend to decay in time  during each regime, meaning that the investor gradually reduces investments towards the end of the trading horizon. 
 \\

%



Figure \ref{wealth_path} shows the optimal  wealth process  over time.  By placing opposite position in two futures, we reduce the shock in portfolio vlaue due to  regime switching. Therefore, we do not observe  sizable jumps in the wealth process at the regime-swtiching times $t_1$ and $t_2$.  Just like the futures and spot prices, the   wealth process tends to decrease in regime 1 and increase in   regime 2 (towards the end of the trading horizon). \\


Lastly in Figure \ref{Phi_path}, we plot the sample path for $|\Phi|$, which is the absolute value of coefficient matrix determinant $\Gamma$. The strict positivity of $|\Phi|$ informs us that $\Gamma$ is invertible and thus the results in Section \ref{sect-utility} can be applied. Like in  Figure \ref{Phi_path_GBM}, $|\Phi|$ tends to increase exponentially in time.

\begin{figure}
\centering
\includegraphics[trim=0.5cm   0.4cm  1.5cm  1.7cm,clip,width=3.5in]{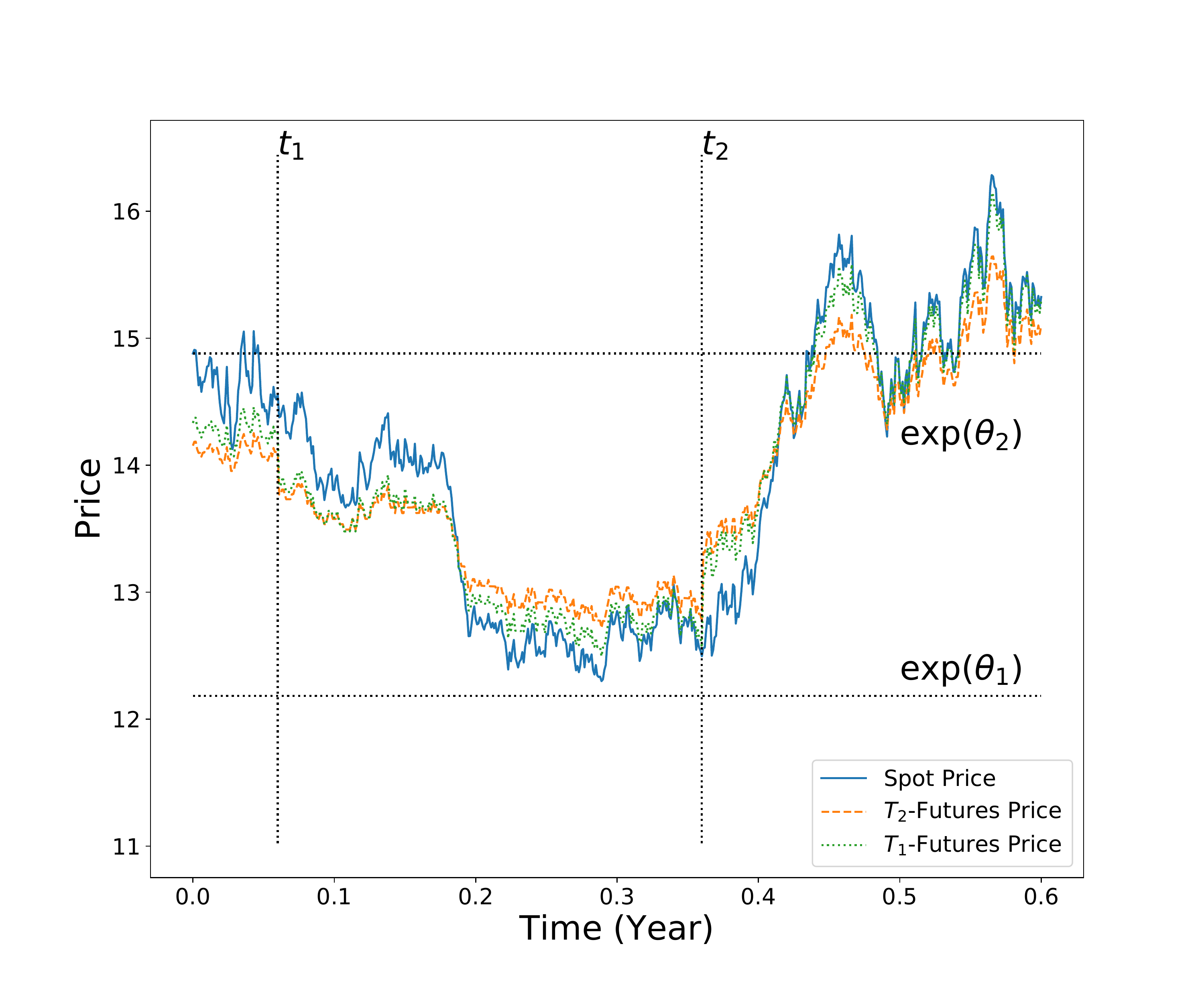}
\caption{Sample paths of the spot price, $T_1$-futures price, and $T_2$-futures price  over the trading horizon under the RS-XOU model. The market starts in regime 2, then switches to regime 1 at time $t_1$, before switching back to regime 2 at time $t_2$.   }\label{spot_path}
\end{figure}

%
%
%
%
\begin{figure}
\centering
\includegraphics[trim=0.5cm   0.4cm  1.5cm  1.7cm,clip,width=3.5in]{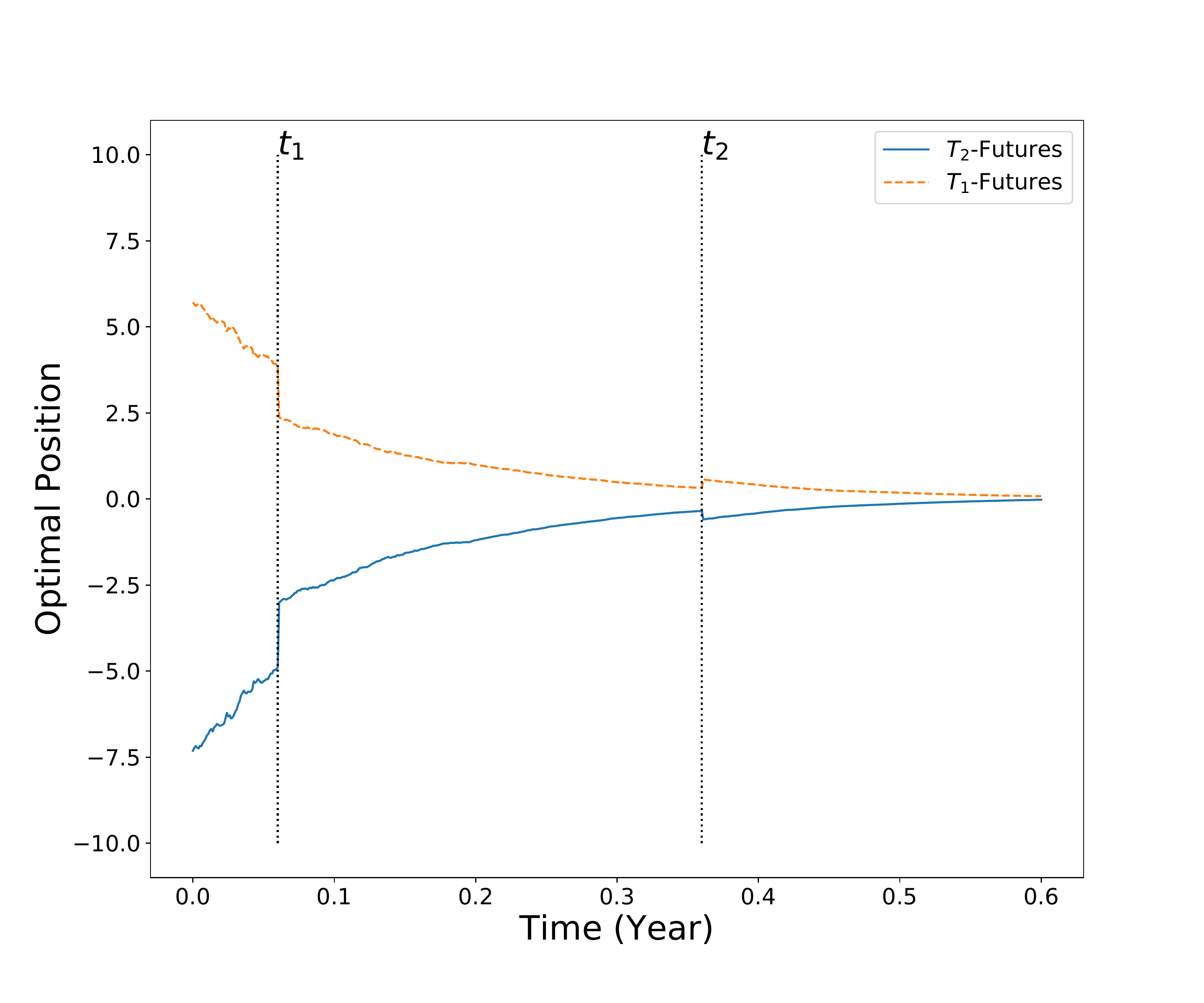}
\caption{Sample paths of the optimal positions ($\pi^{(1)},\pi^{(2)}$) in the  $T_1$-futures and $T_2$-futures respectively  under the RS-XOU model.  The futures positions tend to decrease over time and approaches zero near the end of the trading horizon. Jumps in both  positions occur at the regiem-switching times $t_1$ and $t_2$.  }\label{position_path}
\end{figure}

\begin{figure}
\centering
\includegraphics[trim=0.5cm   0.4cm  1.5cm  1.7cm,clip,width=3.5in]{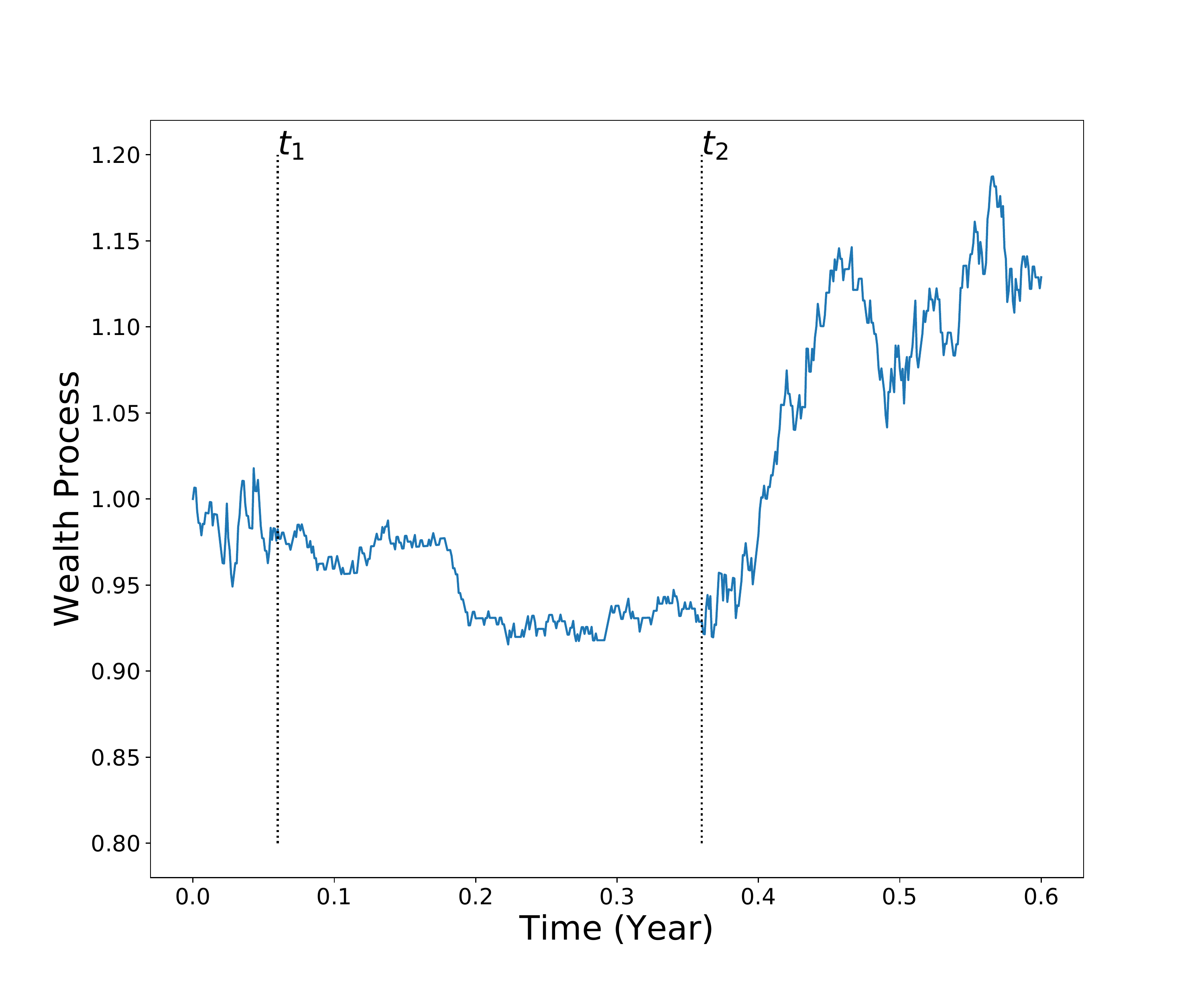}
\caption{Sample path of the optimal wealth process over the trading horizon under the RS-XOU model. }\label{wealth_path}
\end{figure}

\begin{figure}
\centering
\includegraphics[trim=0.5cm   0.4cm  1.5cm  1.7cm,clip,width=3.5in]{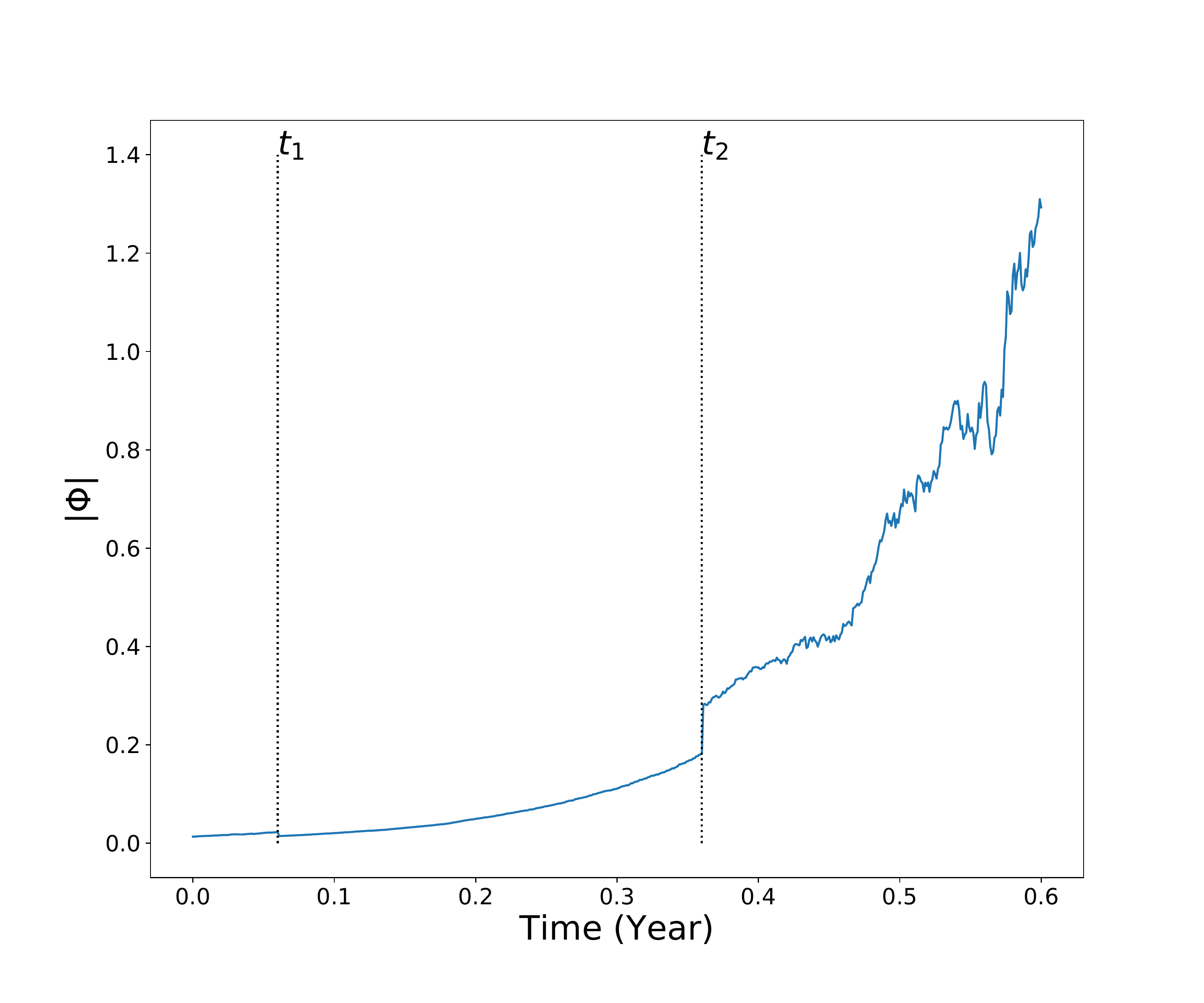}
\caption{Sample path of the absolute value of the determinant $\Phi$ for  the coefficient matrix $\Gamma$ under the RS-XOU model.}\label{Phi_path}
\end{figure}

\clearpage

\section{Conclusion}\label{sect-conclude}
We have analyzed  the problem of dynamically trading a portfolio of futures in a regime-switching market.  Under a general market framework, the portfolio optimization problem leads to the analytical and numerical studies of a system of HJB equations, which are solved explicitly. The optimal trading strategies and optimal wealth process are also given analytically and illustrated numerically.   Our  methodology has been applied to the  RS-GBM and RS-XOU models, but  it can also be used for other model specifications within the framework or adapted to study models with additional factors. Moreover, one can also utilize our framework to examine other classes of trading strategies, such as rolling and timing strategies.\footnote{See \cite{meanReversionBook} and \cite{LeungLiLiZheng2015} for such strategies   under mean-reverting models.} Other than futures portfolio, it is also useful to study the dynamic trading of other derivatives, such as options and swaps, in a regime-switching market.

\bibliographystyle{grappa}

\bibliography{mybib}

\begin{thebibliography}{}

\bibitem[\protect\astroncite{Angoshtari and Leung}{2019a}]{BahmanLeung}
Angoshtari, B. and Leung, T. (2019a).
\newblock Optimal dynamic basis trading.
\newblock {\em Annals of Finance}, 15(3):307--335.

\bibitem[\protect\astroncite{Angoshtari and Leung}{2019b}]{BahmanLeung2}
Angoshtari, B. and Leung, T. (2019b).
\newblock Optimal trading of a basket of futures contracts.
\newblock {\em Working Paper}.
\newblock Available at \url{https://ssrn.com/abstract=3467897}.

\bibitem[\protect\astroncite{Buffington and Elliott}{2002}]{Buffington2002}
Buffington, J. and Elliott, R.~J. (2002).
\newblock American options with regime switching.
\newblock {\em International Journal of Theoretical and Applied Finance},
  5:497--514.

\bibitem[\protect\astroncite{Chen et~al.}{2019}]{Chen19}
Chen, K., Chiu, M., and Wong, H. (2019).
\newblock Time-consistent mean-variance pairs-trading under regime-switching
  cointegration.
\newblock {\em SIAM Journal on Financial Mathematics}, 10(2):632--665.

\bibitem[\protect\astroncite{Elliott et~al.}{2005}]{Elliott2005}
Elliott, R.~J., Chan, L., and Siu, T.~K. (2005).
\newblock Option pricing and {E}sscher transform under regime switching.
\newblock {\em Annals of Finance}, 1(4):423-- 432.

\bibitem[\protect\astroncite{Jackson et~al.}{2008}]{JacksonJaimungalSurkov2008}
Jackson, K.~R., Jaimungal, S., and Surkov, V. (2008).
\newblock Fourier space time-stepping for option pricing with {L}\'evy models.
\newblock {\em Journal of Computational Finance}, 12(2):1--29.

\bibitem[\protect\astroncite{Leung}{2010}]{Leung_regime2010}
Leung, T. (2010).
\newblock A markov-modulated stochastic control problem with optimal stopping
  with application to finance.
\newblock In {\em Proceedings of the 49th {IEEE} Conference on Decision and
  Control}.

\bibitem[\protect\astroncite{Leung et~al.}{2016}]{LeungLiLiZheng2015}
Leung, T., Li, J., Li, X., and Wang, Z. (2016).
\newblock Speculative futures trading under mean reversion.
\newblock {\em Asia-Pacific Financial Markets}, 23(4):281--304.

\bibitem[\protect\astroncite{Leung and Li}{2016}]{meanReversionBook}
Leung, T. and Li, X. (2016).
\newblock {\em Optimal Mean Reversion Trading: Mathematical Analysis and
  Practical Applications}.
\newblock Modern Trends in Financial Engineering. World Scientific Publishing
  Company.

\bibitem[\protect\astroncite{Leung and Yan}{2018}]{LeungYan2018}
Leung, T. and Yan, R. (2018).
\newblock Optimal dynamic pairs trading of futures under a two-factor
  mean-reverting model.
\newblock {\em International Journal of Financial Engineering}, 5(3):1850027.

\bibitem[\protect\astroncite{Leung and Yan}{2019}]{LeungYan2019}
Leung, T. and Yan, R. (2019).
\newblock A stochastic control approach to managed futures portfolios.
\newblock {\em International Journal of Financial Engineering}, 6(1):1950005.

\bibitem[\protect\astroncite{Merton}{1971}]{Merton1971}
Merton, R. (1971).
\newblock Optimum consumption and portfolio rules in a continuous time model.
\newblock {\em Journal of Economic Theory}, 03:373--413.

\bibitem[\protect\astroncite{Surkov}{2009}]{SurkovThesis}
Surkov, V. (2009).
\newblock {\em Option Pricing Using Fourier Space Time-Stepping Framework}.
\newblock PhD thesis, University of Toronto.

\bibitem[\protect\astroncite{Zhou and Yin}{2003}]{ZhouYin2003}
Zhou, X. and Yin, G. (2003).
\newblock Markowitz's mean-variance portfolio selection with regime switching:
  A continuous-time model.
\newblock {\em SIAM Journal on Control and Optimization}, 42(4):1466--1482.

\end{thebibliography}

\end{document}